\newtheorem{theorem}{Theorem}
\newtheorem{corollary}{Corollary}
\newtheorem{lemma}{Lemma}
\def\endthebibliography{%
	\def\@noitemerr{\@latex@warning{Empty `thebibliography' environment}}%
	\endlist
}
\begin{document}
%
\title{Fast Performance Evaluation of Linear Block Codes over Memoryless Continuous Channels}
%
%
%

\author{Jinzhe~Pan
        and~Wai~Ho~Mow,~\IEEEmembership{Senior~Member,~IEEE}
}

\maketitle

\begin{abstract}
	There are rising scenarios in communication systems, where the noises exhibit impulsive behavior and are not adequate to be modeled as the Gaussian distribution. The generalized Gaussian distribution instead is an effective model to describe real-world systems with impulsive noises. 
	In this paper, the problem of efficiently evaluating the error performance of linear block codes over an additive white generalized Gaussian noise (AWGGN) channel is considered. 
	The Monte Carlo (MC) simulation is a widely used but inefficient performance evaluation method, especially in the low error probability regime.
	As a variance-reduction technique, importance sampling (IS) can significantly reduce the sample size needed for reliable estimation based on a well-designed IS distribution.
	By deriving the optimal IS distribution on the one-dimensional space mapped from the observation space, we present a general framework to designing IS estimators for memoryless continuous channels.
	Specifically, for the AWGGN channel, we propose an $L_p$-norm-based minimum-variance IS estimator.
	As an efficiency measure, the asymptotic IS gain of the proposed estimator is derived in a multiple integral form as the signal-to-noise ratio tends to infinity.
	Specifically, for the Laplace and Gaussian noises, the gains can be derived in a one-dimensional integral form, which makes the numerical calculation affordable. 
	In addition, by limiting the use of the union bound to an optimized $L_1$-norm sphere, we derive the sphere bound for the additive white Laplace noise channel.
	Simulation results verify the accuracy of the derived IS gain in predicting the efficiency of the proposed IS estimator.
\end{abstract}

\begin{IEEEkeywords}
Generalized Gaussian distribution, Laplace distribution, AWGN channel, sphere bound, Monte Carlo, importance sampling, channel coding
\end{IEEEkeywords}

%
\IEEEpeerreviewmaketitle

\section{Introduction}
Monte Carlo (MC) simulation is a commonly used tool to numerically estimate the error performance of coded systems, for many of which the analytical solutions are mathematically intractable. 
However, it requires a sufficiently large number of samples to provide reliable estimates of very low word error rate (WER) or bit error rate (BER), which are usually of practical interest.
For instance, some applications for ultra-reliable low-latency communication (URLLC), such as factory automation and tele-surgery, require the WER as low as $10^{-9}$ \cite{shirvanimoghaddam2019short}. To achieve a reliable estimation, at least 100 times more simulation runs than the inverse of the target WER are required empirically.
Therefore, efficient error performance evaluation methods for coded systems are highly desirable.

As a variance-reduction technique, importance sampling (IS) is introduced for probability estimation of rare events, particularly suitable for error performance evaluation in the communication area. Various IS methods are designed to accelerate the simulation for coded systems \cite{jeruchim1984techniques,smith1997quick}. The key to the efficiency of the methods is in the choice of the proposal IS distribution. Although the global optimal IS distribution is already derived in \cite{smith1997quick}, it depends on the parameters to be estimated itself and thus is impractical.
Many IS methods use a mixture of components to approximate the optimal IS distribution, such as the multiple IS methods \cite{elvira2019generalized, elvira2021multiple}.
Usually, these methods involve an iterative adaptation of the IS distribution based on gradually optimizing the parameters of the mixture during the simulation \cite{kurtz2013cross, bugallo2017adaptive}. 
The adaptive IS is used in \cite{han2020low} for estimating the error probability of non-orthogonal multiple access (NOMA) systems. In \cite{jesep2020importance}, a nested IS method that estimates the random-coding error probability of coded-modulation systems is presented. 
There also exist methods based on nonparametric IS distributions, such as the dual adaptive IS method \cite{holzlohner2005evaluation} and the adaptive histogram-shaping MC method \cite{yu2020fast}.

The IS methods in the current literature are mainly designed based on the assumption of Gaussian noise.
Nevertheless, there are a growing number of scenarios, such as urban, indoor, and underwater, where the noises exhibit impulsive nature and the Gaussian noise assumption does not hold. 
For instance, the impulsive noises widely exist in the industrial internet-of-things \cite{chen2020robust}, smart grid and smart home presented environments and have a severe impact on the systems like NOMA \cite{selim2020noma,selim2020effect} and power line communication \cite{mohan2019secrecy, karakucs2020modelling, bai2018impulsive}.
The underwater acoustic channel in a shallow water environment can also be characterized by the presence of impulsive noises \cite{banerjee2013underwater,wang2020new}.
The generalized Gaussian distribution (GGD) is shown as an effective model for describing impulsive noises and is widely adopted in the literature. For example, it is shown in \cite{banerjee2013underwater} that the underwater acoustic channel can be well modeled by the GGD. Specifically, the cases of ship transit noise and sea surface agitation noise can be described by the GGD with shape parameters $p=2.8$ and $p=1.6$, respectively. 
In \cite{beaulieu2009designing}, the time-hopping ultra-wideband interference can be described by the GGD with shape parameter $p \le 1$ for moderate to high signal-to-noise ratio (SNR). 

Systems based on the assumption of generalized Gaussian noise arise in recent works.
In \cite{bariah2020non}, the performance of NOMA systems with additive white generalized Gaussian noise (AWGGN) is investigated.
However, there are few works on studying efficient performance evaluation methods for the AWGGN channel in the current literature.
In \cite{naseri2020fast}, a weighted counting method is presented, where the Laplace distribution is chosen as the IS distribution. The method focuses on efficient sample generation, while the bottleneck of the time cost for the MC simulation is the decoding process. The fewer the number of samples needed, the less time the simulation takes. Hence, the efficiency of the method can be further improved if a well-chosen IS distribution that minimizes the sample size required for the target reliability is adopted.
The IS gain defined in \cite{xia2003importance} is an efficiency measure of the IS method compared to the MC simulation. To the best of the authors' knowledge, none of the aforementioned works provide analytical IS gain that can predict the performance of their IS methods. It makes the technique more attractive if the users can predict how much time they can save once the IS method is adopted for their applications before the simulation.

In this paper, we develop an efficient performance evaluation method for coded systems over the AWGGN channel. 
We find that the conditional pairwise error probability (PEP) is present in the variance of the IS estimator and takes an essential role in the efficiency analysis of the proposed method.
The conditional PEP also widely appears and is of crucial significance in the derivation of error performance bounds.
For example, Gallager's first bounding technique (GFBT) improves the union bound by limiting the use of the union bound to a so-called Gallager region and avoiding over-estimating the error probability outside the region \cite{sason2006performance,divsalar1999simple,agrell1996voronoi}.
Consequently, the PEP conditioned on the given Gallager region is required in the derivation of the bound. This situation happens to many GFBT-based bounds such as the sphere bound \cite{herzberg1994techniques}, and the well-known tangential sphere bound \cite{poltyrev1994bounds}.
However, most of the bounds in the literature are designed based on the assumption of Gaussian noises and are unsuitable for evaluating the performance of impulsive noise channels.
As a special case of the GGD, it shows that the Laplace distribution can accurately model the noises in various systems \cite{beaulieu2008p, beaulieu2008soft, desai2007robust}. Due to its simplicity within the GGD family, the performance analysis of systems assuming Laplace noise is worth studying.
We derive the PEP conditioned on the $L_1$-norm of the noise vector in a closed-form expression. 
It helps to derive both the asymptotic IS gain of the proposed IS estimator and the sphere bound for the additive white Laplace noise (AWLN) channel, which is tighter than the union bound provided in \cite{marks1978detection, shao2012investigation}.

Our main contributions are summarized as follows:
\begin{itemize}
	\item[1.] We present a general framework to designing importance sampling (IS) estimators for memoryless continuous channels. 
	The proposal IS distribution is a generalized spherical distribution with level sets determined by a mapping function that maps the observation space to the real line. 
	We keep the conditional distribution within each level set constant and adjust the densities among these level sets.
	Given an arbitrary mapping function, we derive the optimal IS distribution that minimizes the variance of the IS estimator.
	We consider the mapping functions of summation format and propose a sample generation method for the corresponding proposal IS distribution.
	\item[2.] For the additive white generalized Gaussian noise channel, we choose the $L_p$-norm as the mapping function and derive the corresponding optimal IS distribution, which is an $L_p$-norm spherical ($L_p$-spherical) distribution. 
	For any $L_p$-spherical distributions, we present a sampling scheme, which consists of a transformation method that can generate samples exactly following the multivariate uniform distribution in the $L_p$-sphere.
	\item[3.] For the additive white Laplace noise (AWLN) channel, we derive the pairwise error probability (PEP) conditioned on the $L_1$-norm of the noise vector in a closed-form expression. Based on the conditional PEPs, the sphere bound on the error probability of maximum-likelihood decoding of a binary linear block code is derived. The radius of the $L_1$-sphere is optimized to tighten the bound.
	\item[4.] Finally, we analyze the efficiency of the proposed IS estimator and derive the asymptotic IS gain in a multiple integral form as SNR tends to infinity, where the dimension of the integral is the minimum distance of the code. Specifically, for the AWLN and AWGN channels, the gains can be simplified to a one-dimensional integral based on the derived closed-form conditional PEPs, which significantly reduce the complexity of the numerical calculation.
\end{itemize}

The rest of the paper is organized as follows. In Section \ref{sec:preliminary}, some preliminaries of the importance sampling and the bounding technique are provided. The general framework to designing IS estimators for memoryless continuous channels is presented in Section \ref{sec:optIS}. The proposed IS estimator for the AWGGN channel is described in Section \ref{sec:propIS}. The sphere bound for the AWLN channel is derived in Section \ref{sec:sb}. In Section \ref{sec:ISgain}, the asymptotic IS gain of the proposed IS estimator is presented. Simulation results are shown in Section \ref{sec:results}. Finally, several concluding remarks are given in Section \ref{sec:conclusion}. 

\section{Preliminaries}
\label{sec:preliminary}
Consider an $(n,k)$ linear block code transmitted through a memoryless continuous channel. Denote the channel input by $\mathbf{x} = [x_1,\dots, x_n]^T$ and the channel output by $\mathbf{Y} = [Y_1, \dots, Y_n]^T$, $\mathbf{Y} \in \mathcal{Y}$, where $\mathcal{Y}$ is denoted as the $n$-dimensional observation space. In this paper, we use bold lower case letters, e.g., $\mathbf{x}$ to refer to deterministic samples, and bold upper case letters, e.g., $\mathbf{Y}$ to refer to random vectors. 

Denote $\mathcal{E}$ as the error region of the transmitted signal vector $\mathbf{x}$ and $I(\mathbf{y})$ as an indicator function, which equals 1 if the channel output $\mathbf{y}$ falls inside the error region, and 0 otherwise. The word error rate (WER) $P_e$ can be written as
\begin{equation}
	P_e = \text{Pr}\left(\mathbf{Y} \in \mathcal{E} \right) = \int_{\mathcal{Y}} I(\mathbf{y}) f(\mathbf{y}) d \mathbf{y} ,
	\label{eq:WER}
\end{equation}
where $f(\mathbf{y}) = \prod_{i=1}^n f(y_i)$ and $f(y_i)$ represents the probability density function (p.d.f.) of $Y_i$.

\subsection{Importance Sampling}
The standard MC method estimates the WER in (\ref{eq:WER}), denoted by $\hat{P}_e^{\text{MC}}$, is unbiased and can be written as
\begin{equation}
	\hat{P}_e^{\text{MC}} = \frac{1}{N} \sum_{i=1}^{N} I(\mathbf{y}_i), \quad \mathbf{y}_i \sim f(\mathbf{y}),
	\label{eq:MC_est}
\end{equation}
where $\mathbf{y}_i$ is the $i$-th sample and $N$ is the total number of samples. 

The variance of the estimator is commonly used as the metric to describe its reliability in approaching the groundtrue value of $P_e$ as $N$ increases. It is well-known that the variance of the MC estimator is
\begin{equation}
	\text{Var} \left[ \hat{P}_e^{\text{MC}} \right] = \frac{P_e \left(1-P_e \right)}{N}.
	\label{eq:MC_var}
\end{equation}

Importance sampling is a variance-reduction technique that involves generating samples from a biased distribution $f^*(\mathbf{z})$, called the IS distribution, instead of $f(\mathbf{z})$. Its basic concept is to reformulate the WER in (\ref{eq:WER}) as 
\begin{equation}
	P_e = \int_{\mathcal{Y}} I(\mathbf{y}) \frac{f(\mathbf{y})}{f^*(\mathbf{y})} f^*(\mathbf{y}) d \mathbf{y},
	\label{eq:WER_IS}
\end{equation}
and estimates the weighted average
\begin{equation}
	\hat{P}_e^{\text{IS}} = \frac{1}{N} \sum_{i=1}^{N} I(\mathbf{y}_i) \frac{f(\mathbf{y}_i)}{f^*(\mathbf{y}_i)}, \quad \mathbf{y}_i \sim f^*(\mathbf{y}).
	\label{eq:IS_est}
\end{equation}  

As the expectation of $\hat{P}_e^{\text{IS}}$ is the groundtrue value $P_e$, the IS estimator is also unbiased. Its variance is derived in \cite{xia2003importance} as
\begin{equation}
	\text{Var} \left[ \hat{P}_e^{\text{IS}} \right] = \frac{1}{N} \left(\int_{\mathcal{Y}} I(\mathbf{y}) \frac{f^2(\mathbf{y})}{f^*(\mathbf{y})} d \mathbf{y} - P_e^2 \right).
	\label{eq:IS_var}
\end{equation}

Relative error of the estimator is commonly used as the stopping criterion of the simulation, which is defined as \cite{rubinstein2016simulation}
\begin{equation}
	\kappa \triangleq \frac{\sqrt{\text{Var} \left[ \hat{P}^{\text{IS}}_e \right]}}{P_e}.
	\label{eq:SysM_relative_error}
\end{equation}
Specifically, if the IS distribution is the same as the original sampling distribution, the IS estimator is degraded to the MC estimator. When the error probability is small (i.e. $P_e \ll 1$), the relative error in the MC simulation becomes
\begin{equation}
	\kappa = \sqrt{\frac{1-P_e}{NP_e}}.
\end{equation}
The number of samples needed to achieve a given $\kappa$ can be approximated by
\begin{equation}
	N \approx \frac{1}{\kappa^2 P_e},
	\label{eq:subIS_MC_N}
\end{equation}
which suggests that $N \approx 100/P_e$ samples are required in order to obtain a reliable estimation result with a relative error of $10 \%$, i.e., $\kappa = 0.1$. 

\subsection{AWGGN Channel}
When the AWGGN channel is considered, the channel model can be described as
\begin{equation}
	\mathbf{Y} = \mathbf{x} + \mathbf{Z},
\end{equation}
where $\mathbf{Z} = [Z_1, Z_2, \dots, Z_n]^T$ and $Z_i$'s are independent and identically distributed (i.i.d.) random variables following zero-mean generalized Gaussian distribution (GGD). The p.d.f. of $Z_i$, with variance $\sigma^2$, can be written as
\begin{equation}
	f(z) = \frac{p}{2 \alpha \Gamma(\frac{1}{p})} e^{-\frac{|z|^p}{\alpha^p}}, \quad z \in \mathbb{R}, p>0, \alpha>0,
\end{equation}
where $\mathbb{R}$ is the real line, $\alpha = \sigma \sqrt{\frac{\Gamma(1/p)}{\Gamma(3/p)}}$,
and 
\begin{equation*}
	\Gamma(x) = \int_{0}^{\infty} t^{x-1}e^{-x} dx
\end{equation*} 
is the Gamma function. The parameters $p$ and $\alpha$ control the shape and scale of the distribution, respectively. One remark is that the GGD reduces mathematically to the Laplace and Gaussian distribution when $p=1$ and $p=2$, respectively. 

\subsection{Gallager's First Bounding Technique (GFBT)}
Error probability bounds are widely used to evaluate the MLD performance of a binary linear block code. Many well-known bounds are based on the so-called Gallager's first bounding technique \cite{divsalar1999simple}. The basic concept is to introduce a Gallager region $\mathcal{G}$ around the transmitted signal vector so that the classical union bound can be tightened by avoiding over-estimating the error probability outside the Gallager region.
\begin{align}
	P_e & = \text{Pr}(\mathbf{Y} \in \mathcal{E} \cap \mathcal{G}) + \text{Pr}(\mathbf{Y} \in \mathcal{E}, \mathbf{Y} \notin \mathcal{G})\nonumber \\
	& \le \text{Pr}(\mathbf{Y} \in \mathcal{E} \cap \mathcal{G}) + \text{Pr}(\mathbf{Y} \notin \mathcal{G}).
	\label{eq:GFBT}
\end{align}
The first term on the right-hand side in (\ref{eq:GFBT}) is upper bounded by the union bound. The second term usually dominates the error probability for low SNR.

\section{Proposed IS Estimator Design Framework}
\label{sec:optIS}

In this section, we propose a general framework to designing IS estimators for memoryless continuous channels. 
Based on an arbitrary mapping function that maps the observation space to the real line, we simplify the optimal IS distribution searching problem from $n$-dimensional to one-dimensional. A corresponding IS estimator is therefore proposed.
The choice of the mapping function and the sample generation method based on the derived optimal IS distribution are investigated.

\subsection{Optimal IS Distribution}
As a variance-reduction technique, the efficiency of the IS method is determined by the proposal IS distribution. 
The global optimal IS distribution on the $n$-dimensional observation space $\mathcal{Y}$, which makes the variance of the IS estimator in (\ref{eq:IS_var}) become 0, is already known and provided in \cite{smith1997quick} as
\begin{equation}
	f^{\star}(\mathbf{y}) = \frac{I(\mathbf{y})f(\mathbf{y})}{P_e}.
\end{equation}
However, it requires $P_e$ that needs to be estimated itself and is thus infeasible.
Directly searching for a proper IS distribution for $n$-dimensional random vector $\mathbf{Y}$ that can accelerate the simulation is extremely complex. We simplify the optimal IS distribution searching problem to one-dimensional based on a chosen mapping function.

Consider a mapping function $w: \mathcal{Y} \mapsto \mathcal{R}$, where $\mathcal{R} \subseteq \mathbb{R}$ is a subset of the real line. 
Meanwhile, a random variable can be defined as $R \triangleq w(\mathbf{Y})$ with $\mathcal{R}$ as its sample space. The p.d.f. of $R$ can be expressed as 
\begin{equation}
	g(r) = \int_{\mathcal{Y}_r} f(\mathbf{y}) d\mathbf{y},
\end{equation}
where $\mathcal{Y}_r  = \{\mathbf{y}\in\mathcal{Y}: w(\mathbf{y}) = r\}$ is the level set of $w(\mathbf{y})$ given constant value $r$. Apparently, the sets $\mathcal{Y}_r$'s for different $r$ are mutually disjoint and the union of all $\mathcal{Y}_r$'s is the whole observation space $\mathcal{Y}$.

The distribution of $\mathbf{Y}$ conditioned on the level set $\mathcal{Y}_r$ for the original distribution is 
\begin{equation*}
	f(\mathbf{y}| R= r)  = \frac{I_r(\mathbf{y}) f(\mathbf{y})}{g(r)},
\end{equation*}
where $I_r(\mathbf{y})$ is the indicator function that returns 1 if $w(\mathbf{y}) = r$ and 0 otherwise.
We construct our proposal IS distribution by keeping the conditional distribution $f(\mathbf{y}| R= r)$ constant and adjusting the density function $g(r)$ as
\begin{equation}
	f^*(\mathbf{y}) = \int_{\mathcal{R}} f(\mathbf{y}| R= r) g^*(r) dr = \frac{f(\mathbf{y})}{g(w(\mathbf{y}))} \cdot g^*(w(\mathbf{y})),
	\label{eq:IS_dist}
\end{equation}
where $g^*(r)$ can be regarded as the IS distribution for $R$. 

Therefore, the proposed IS estimator can be written as
\begin{align}
	\hat{P}_e^{\text{IS}} = \frac{1}{N} \sum_{i=1}^N I(\mathbf{y}_i) \frac{f(\mathbf{y}_i)}{f^*(\mathbf{y}_i)}=  \frac{1}{N} \sum_{i=1}^N I(\mathbf{y}_i) \frac{g(w(\mathbf{y}_i))}{g^*(w(\mathbf{y}_i))}, \quad \mathbf{y}_i \sim f^*(\mathbf{y}).
	\label{eq:ISest_cc}
\end{align}

By instituting $f^*(\mathbf{z})$ in (\ref{eq:IS_dist}) into (\ref{eq:IS_var}), the variance of the proposed IS estimator can be derived as
\begin{align}
	\text{Var} \left[\hat{P}_e^{\text{IS}}\right] &= \frac{1}{N} \left(\int_{\mathcal{Y}} I(\mathbf{y}) \frac{f^2(\mathbf{y})}{f^*(\mathbf{y})} d\mathbf{y} - P_e^2 \right) \nonumber \\
	& = \frac{1}{N} \left( \int_{\mathcal{Y}} I(\mathbf{y}) \frac{g(w(\mathbf{y}))}{g^*(w(\mathbf{y}))} f(\mathbf{y}) d\mathbf{y} -P_e^2 \right) \nonumber \\
	&=  \frac{1}{N} \left( \int_{\mathcal{R}} \frac{g(r)}{g^*(r)} \int_{\mathcal{Y}_r} I(\mathbf{y})  f(\mathbf{y}) d\mathbf{y} dr -P_e^2 \right)  \nonumber \\
	&= \frac{1}{N} \left( \int_{\mathcal{R}} \theta(r) \frac{g^2(r)}{g^*(r)}dr -P_e^2\right),
	\label{eq:IS_var_r}
\end{align}
where 
\begin{equation}
	\theta(r) \triangleq \Pr\left( I(\mathbf{Y})=1| w(\mathbf{Y}) = r\right) = \frac{1}{g(r)} \int_{\mathcal{Y}_r} I(\mathbf{y})  f(\mathbf{y}) d\mathbf{y},
	\label{eq:error_ratio}
\end{equation}
named as the error ratio, is defined as the error probability conditioned on $\mathcal{Y}_r$.

For a given mapping function $w(\mathbf{y})$, the following theorem provides the optimal IS distribution $g^{\star}(r)$ over all of the possible distributions on the sample space $\mathcal{R}$ that minimizes the variance of the estimator. 
\begin{theorem}
	\label{th:optIS}
	Consider a memoryless continuous channel with a received random vector $\mathbf{Y}$. Given an arbitrary function $w: \mathcal{Y} \mapsto \mathcal{R}$ and denote $R= w(\mathbf{Y})$ as a random variable with p.d.f. $g(r)$. Then, the optimal IS distribution for $R$ that minimizes the variance of the IS estimator in (\ref{eq:IS_var_r}) is given by
	\begin{equation}
		g^{\star}(r) = \frac{\sqrt{\theta(r)} g(r)}{\int_{\mathcal{R}} \sqrt{\theta(r)}g(r) dr}, \quad r \in \mathcal{R}. 
		\label{eq:opt_IS_r}
	\end{equation}
\end{theorem}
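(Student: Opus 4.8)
The plan is to read (\ref{eq:IS_var_r}) as a constrained optimization problem. Since the prefactor $1/N$ and the term $P_e^2$ are constants independent of the proposal $g^*$, minimizing the variance is equivalent to minimizing the functional
\[
J[g^*] = \int_{\mathcal{R}} \theta(r) \frac{g^2(r)}{g^*(r)}\, dr
\]
over all admissible $g^*$, that is, subject to the constraints $g^*(r) \ge 0$ and $\int_{\mathcal{R}} g^*(r)\, dr = 1$.

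The cleanest route I would take is through the Cauchy--Schwarz inequality, which simultaneously supplies a lower bound and the argument attaining it. Writing the integrand factor as $\sqrt{\theta(r)}\,g(r) = \bigl(\sqrt{\theta(r)}\,g(r)/\sqrt{g^*(r)}\bigr)\cdot\sqrt{g^*(r)}$ and applying Cauchy--Schwarz yields
\[
\left(\int_{\mathcal{R}} \sqrt{\theta(r)}\,g(r)\, dr\right)^2 \le \left(\int_{\mathcal{R}} \frac{\theta(r)\,g^2(r)}{g^*(r)}\, dr\right)\left(\int_{\mathcal{R}} g^*(r)\, dr\right) = J[g^*],
\]
where the last equality uses the normalization $\int_{\mathcal{R}} g^* = 1$. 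The left-hand side does not depend on $g^*$, so it is a universal lower bound for $J$. Equality in Cauchy--Schwarz holds exactly when the two factors are proportional, i.e. when $\sqrt{\theta(r)}\,g(r)/\sqrt{g^*(r)} \propto \sqrt{g^*(r)}$, which rearranges to $g^*(r) \propto \sqrt{\theta(r)}\,g(r)$. Imposing the normalization constraint then fixes the proportionality constant and reproduces exactly the expression $g^{\star}$ claimed in (\ref{eq:opt_IS_r}).

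As an independent cross-check I would rederive the same formula by calculus of variations: forming the Lagrangian $J[g^*] + \lambda\bigl(\int_{\mathcal{R}} g^* - 1\bigr)$ and setting the first variation to zero gives the stationarity condition $-\theta(r)\,g^2(r)/(g^*(r))^2 + \lambda = 0$, whence $g^*(r) = \sqrt{\theta(r)}\,g(r)/\sqrt{\lambda}$, with $\lambda$ pinned down by normalization. Both routes land on the same candidate.

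The main obstacle, and the reason I prefer the Cauchy--Schwarz argument, is establishing \emph{global} optimality: the variational calculation only certifies a stationary point and would need a separate second-order or convexity argument to confirm it is a minimizer, whereas the equality case in Cauchy--Schwarz identifies the global minimum at once. A minor well-posedness point to record is that $\theta(r)\ge 0$ and $g(r)\ge 0$ ensure the candidate $g^{\star}$ is a legitimate non-negative density, and that the set where $\theta(r)=0$ contributes nothing to either integral, so it does not affect the argument.
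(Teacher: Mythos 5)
Your proof is correct and is essentially the paper's argument in different clothing: the paper applies Jensen's inequality with $\phi(x)=x^2$ to the random variable $X=\sqrt{\theta(R)}\,g(R)/g^*(R)$ under the $g^*$-measure, which yields exactly the same second-moment bound $\int_{\mathcal{R}} \theta(r)\,g^2(r)/g^*(r)\,dr \ge \left(\int_{\mathcal{R}} \sqrt{\theta(r)}\,g(r)\,dr\right)^2$ as your Cauchy--Schwarz step, and both proofs then extract the optimizer from the corresponding equality condition ($X$ constant almost surely there, proportionality of the two factors here), finishing with normalization. Your calculus-of-variations cross-check is additional but unnecessary, and your closing remarks on nonnegativity and the $\theta(r)=0$ set are consistent with, though not spelled out in, the paper's proof.
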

\begin{proof}
	Define the random variable $X \triangleq \frac{\sqrt{\theta(R)}g(R)}{g^*(R)}$ and the convex function $\phi(x) = x^2$. According to Jensen's inequality in {\cite[Theorem 1.5.1]{durrett2019probability}}, we can get
	\begin{align*}
		E\left[\phi(X)\right] &\ge \phi(E\left[ X\right]) \\
		\Rightarrow \quad E \left[ \left( \frac{\sqrt{\theta(R)}g(R)}{g^*(R)}\right)^2\right] &\ge E^2 \left[ \frac{\sqrt{\theta(R)}g(R)}{g^*(R)}\right] \\
		\Rightarrow \quad \int_{\mathcal{R}} \left( \frac{\sqrt{\theta(r)}g(r)}{g^*(r)}\right)^2 \cdot g^*(r) dr &\ge \left( \int_{\mathcal{R}} \frac{\sqrt{\theta(r)}g(r)}{g^*(r)} \cdot g^*(r) dr\right)^2\\
		\Rightarrow \quad \int_{\mathcal{R}} \frac{\theta(r) g^2(r)}{g^*(r)} dr &\ge \left( \int_{\mathcal{R}} \sqrt{\theta(r)} g(r) dr\right)^2,
	\end{align*}
	where the left-hand side is the first term of the variance of the proposed IS estimator in (\ref{eq:IS_var_r}), and the right-hand side is a constant. Thus, we can get the following inequality for the derived variance
	\begin{equation}
		\text{Var} \left[\hat{P}_e^{\text{IS}}\right] \ge \frac{1}{N}\left( \left(\int_{\mathcal{R}} \sqrt{\theta(r) }g(r) dr \right)^2 -P_e^2 \right).
		\label{eq:opt_var_IS}
	\end{equation}
	
	According to Theorem 1.4.7 in \cite{durrett2019probability}, the equality holds if and only if $\phi(x)$ is affine or $X$ is a constant almost surely.
	Since $\phi(x) = x^2$ is strictly convex for $x >0$, it follows that $X = \frac{\sqrt{\theta(R)}g(R)}{g^*(R)}$ is a constant almost surely and
	\begin{equation*}
		\left(\frac{\sqrt{\theta(r)}g(r)}{g^*(r)} \right)^2 \int_{\mathcal{R}}g^*(r) dr = \left( \int_{\mathcal{R}} \sqrt{\theta(r)} g(r) dr\right)^2.
	\end{equation*} 
	Therefore, the optimal IS distribution of $R$ is 
	\begin{equation*}
		\begin{aligned}
			g^{\star}(r) = \frac{\sqrt{\theta(r)} g(r)}{\int_{\mathcal{R}} \sqrt{\theta(r)}g(r) dr}.
		\end{aligned}
	\end{equation*}
\end{proof}

Unfortunately, the optimal IS distribution contains the error ratio $\theta(r)$, which itself needs to be estimated.
The adaptive IS can be adopted to solve the problem. By regarding the error ratios as the parameters of the IS distribution, we gradually approach the optimal IS distribution by updating these parameters iteratively during the simulation.

\subsection{Mapping Function and the Sample Generation Method}
The minimized variance shown in the right-hand side of the inequality (\ref{eq:opt_var_IS}) is determined by the choice of the mapping function $w(\mathbf{y})$.
When we choose $w(\mathbf{y})$, we need to take both the variance minimization and the efficiency of the sample generation into consideration. 
We can see that the variance becomes zero if $\sqrt{\theta(r)} = \theta(r), \forall r \in \mathcal{R}$, which means the error ratio becomes an indicator function. This leads to the mapping function whose level sets $\mathcal{Y}_r$'s are scaled decision regions. 

\begin{figure}[!h]
	\centering
	\includegraphics[width=3in]{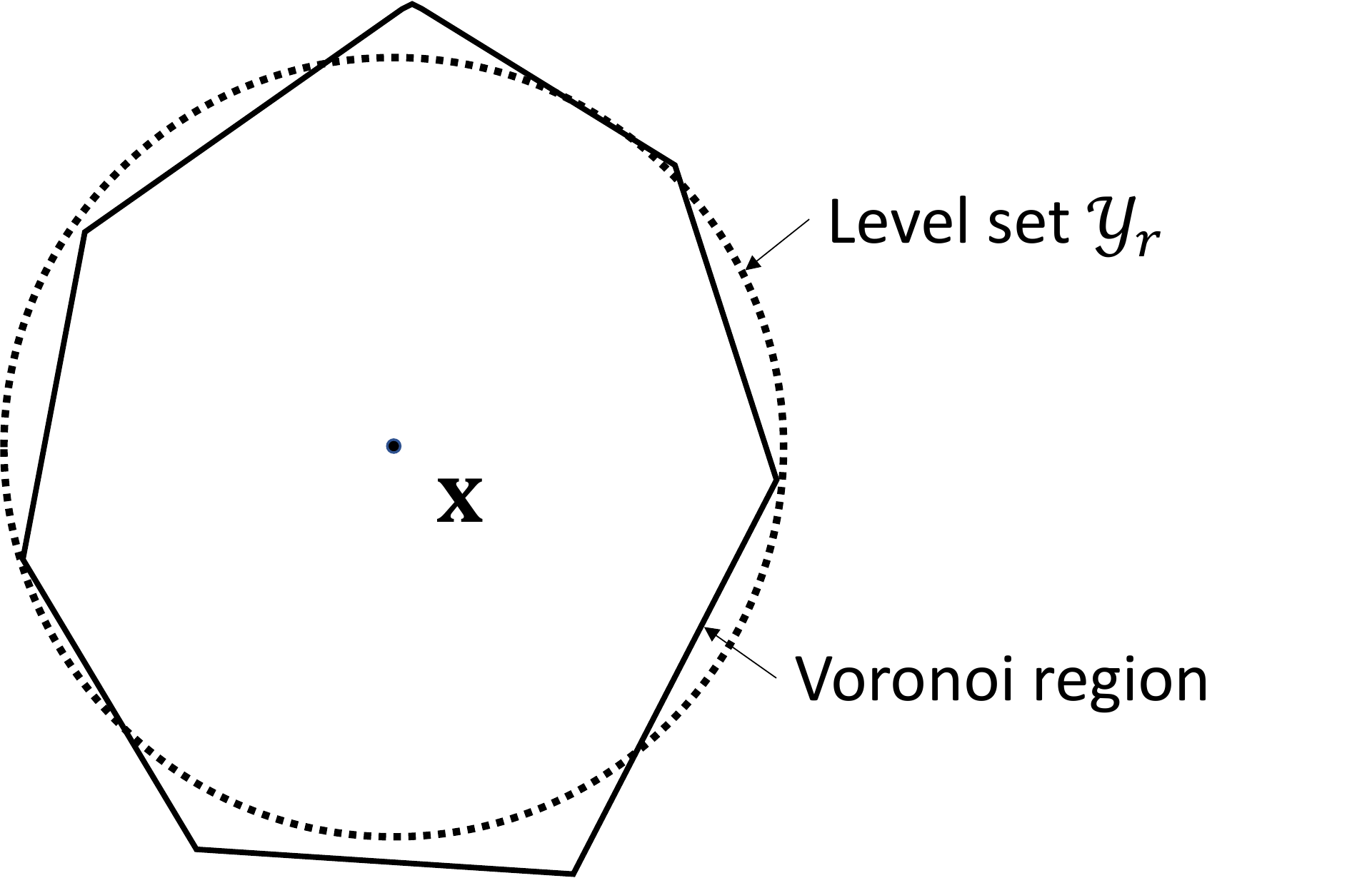} 
	\caption{Geometric interpretation of the level set of the mapping function and the Voronoi region of $\mathbf{x}$.}
	\label{fig:Voronoi}
\end{figure}

In most cases, the knowledge about the decision region of the code is absent. According to the maximum-uncertainty principle, a function that makes each $\mathcal{Y}_r$ an equal-density contour can be a candidate for this extreme case. It can be illustrated as having the equivalent probability density for all directions in the geometric interpretation. Therefore, such a mapping function must be some variant of the joint p.d.f., e.g., the negative log-likelihood
\begin{equation}
	w(\mathbf{y}) = -\sum_{i=1}^{n} \log f(y_i).
	\label{eq:neg_LL}
\end{equation} 
In general, the closed-form expression of the p.d.f. $g(r)$ is intractable. However, since $g(r)$ is one-dimensional, it is worth numerically estimating and tabulating $g(r)$ offline compared to the time saving from the IS. 

From the structure of the IS distribution, we can roughly divide the sample generation process into two phases. During the first phase, a scalar $r'$ is randomly generated from the p.d.f. $g^*(r)$, which is numerically straightforward even if its closed-form expression is unavailable. The challenge is in the second phase about drawing samples from the level set $\mathcal{Y}_{r'}$. Suppose the mapping function takes the form 
\begin{equation}
	w(\mathbf{y}) = \sum_{i=1}^{n} \nu(y_i), \quad \text{ for } \mathbf{y} \in \mathcal{Y},
\end{equation}
where $\nu(\cdot)$ is an arbitrary function so that the following transformation 
\begin{align}
	\left\{ \begin{matrix*}[l]
		u_i = \frac{\nu(y_i)}{w(\mathbf{y})}, \quad \text{ for } i = 1,2,\cdots,n-1, \\
		r = w(\mathbf{y}),
	\end{matrix*}
	\right.
	\label{eq:transform}
\end{align}
is 1-to-1 correspondence in the domain of $(u_1,\cdots,u_{n-1},r)$.

Denote $\rho = \nu^{-1}$ as the inverse function of $\nu$. Then,
\begin{align*}
	\left\{ \begin{matrix*}[l]
		y_i = \rho(ru_i), \quad \text{ for } i = 1,2,\cdots,n-1, \\
		y_n = \rho \left(r - r\sum_{i=1}^{n-1} u_i\right).
	\end{matrix*}
	\right.
\end{align*}
Hence, the Jacobian of the transformation is
\begin{align*}
	J(u_1,\cdots,u_{n-1},r) = \left| \begin{matrix}
		r\rho'(ru_1) & \cdots & 0 & -r \rho'(ru_n) \\
		\vdots & \ddots & \vdots & \vdots \\
		0 & \cdots & r\rho'(ru_{n-1}) & -r\rho'(ru_n) \\
		u_1\rho'(ru_1) & \cdots & u_{n-1}\rho'(ru_{n-1}) & u_n \rho'(ru_n)
	\end{matrix} \right| = r^{n-1} \prod_{i=1}^{n} \rho'(r u_i),
\end{align*}
where $u_n = 1- \sum_{i=1}^{n-1} u_i$ for simplicity.

Then the joint p.d.f. of $U_1, \cdots, U_{n-1}, R$ is 
\begin{align}
	f(u_1,\cdots,u_{n-1},r) & = J(u_1,\cdots,u_{n-1},r) f\left(\rho\left(r-r\sum_{i=1}^{n-1} u_i \right)\right) \prod_{i=1}^{n-1} f\left(\rho(ru_i)\right) \nonumber \\
	& = r^{n-1} \prod_{i=1}^n  \rho'(r u_i) f\left(\rho(ru_i)\right).
\end{align}
Therefore, the marginal distribution of $R$ is 
\begin{equation*}
	g(r) = \int_{\mathcal{Y}_r} f(\mathbf{y}) d \mathbf{y} = \underbrace{\int \cdots \int}_{n-1} f(u_1,\cdots, u_{n-1},r) du_1 \cdots d u_{n-1}.
\end{equation*}
and the distribution of $U_1,\cdots,U_{n-1}$ conditioned on the radius $r$ is
\begin{align}
	f(u_1,\cdots, u_{n-1}|r) = \frac{1}{g(r)} f(u_1,\cdots,u_{n-1},r) = \frac{r^{n-1}}{g(r)} \prod_{i=1}^n  \rho'(r u_i) f\left(\rho(ru_i)\right).
	\label{eq:pdf_r}
\end{align}
We can use Markov chain Monte Carlo methods like the Metropolis algorithm to generate samples from the above conditional distribution.

Specifically, if the negative likelihood (\ref{eq:neg_LL}) is chosen as the mapping function (i.e., $\nu(y) = -\log f(y)$), then $\rho (x) = f^{-1}(e^{-x})$, and the conditional distribution becomes
\begin{align*}
	f(u_1,\cdots,u_{n-1}|r) &= -\frac{r^{n-1}}{g(r)} \prod_{i=1}^{n} e^{-ru_i} \left(f^{-1}\right)' (e^{-ru_i}) f\left(f^{-1}(e^{-ru_i})\right)\\
	&= -\frac{r^{n-1}e^{-2r}}{g(r)} \prod_{i=1}^{n} \left(f^{-1}\right)'(e^{-ru_i})\\
	& = -\frac{r^{n-1}e^{-2r}}{g(r)} \prod_{i=1}^{n} \frac{1}{f'\left(f^{-1}(e^{-ru_i})\right)}.
\end{align*}

For the function $\nu$ that is not invertible, we define $\rho$ as a multivalued inverse of $\nu$ that consists of invertible branches defined on their restricted domains, respectively.
The transformation is considered for the domain of each branch separately, so that a 1-to-1 correspondence is guaranteed.
A sample generation method can be proposed as follows.
\begin{itemize}
	\item[1. ] A sample $y'$ is generated from the original distribution $f(y)$. Corresponding to the domain that $y'$ falls in, a branch is determined. 
	\item[2. ] A radius $r'$ is generated from the p.d.f. $g^*(r)$ conditioned on the determined branch.
	\item[3. ] A vector $[u_1, u_2, \cdots, u_n]^T$ is sampled following $f(u_1,\cdots,u_{n-1}|r')$ in (\ref{eq:pdf_r}) using Metropolis algorithm.
	\item[4. ] A sample $\mathbf{y} = [y_1,\cdots,y_n]^T$ that following the IS distribution $f^*(\mathbf{y})$ is obtained by setting $y_i = \rho(r'u_i)$ for $i=1,2,\cdots,n$.
\end{itemize}

One remark is that if $U_1,\cdots, U_{n-1}$ are independent from $R$ (i.e., $f(u_1,\cdots, u_{n-1},r) = g(r)f(u_1,\cdots, u_{n-1})$), the above step 3 can be replaced by a more efficient and straightforward sampling method that does not involve calculation of $f(u_1,\cdots,u_{n-1}|r')$ as follows.
\begin{itemize}
	\item[1. ] Draw $n$ independent samples $y'_1,y'_2,\cdots, y'_n$ from $f(y)$ conditioned on the determined branch. 
	\item[2. ] Set $u_i = \frac{\nu(y'_i)}{\sum_{j=1}^n \nu(y'_j)} $, for $i = 1,\cdots, n$, so that a random vector $[u_1, \cdots, u_n]^T$ following $f(u_1,\cdots, u_{n-1})$ is sampled.
\end{itemize}

\section{Proposed IS Estimator for the AWGGN Channel}
\label{sec:propIS}
In this section, we apply the proposed framework to the AWGGN channel. By choosing the $L_p$-norm as the mapping function, we propose an $L_p$-norm-based minimum-variance IS estimator. A sample generation process that can uniformly draw samples from the $L_p$-sphere for the IS simulation is presented, and the details of the IS algorithm are shown.

Since the code is linear and the channel is symmetric, without loss of generality, assume the signal vector $\mathbf{x}_0$ is transmitted. As the channel output is determined by the noise, we consider the noise random vector $\mathbf{Z}$ instead of $\mathbf{Y}$ for simplicity. Without any side information about the decision region of the code, a variant of the joint p.d.f. $f(\mathbf{z})$ is chosen as the mapping function
\begin{equation*}
	w(\mathbf{z}) = \alpha \left(n \log \frac{p}{2 \alpha\Gamma(\frac{1}{p})} - \log f(\mathbf{z}) \right)^{\frac{1}{p}} = \|\mathbf{z}\|_p,
\end{equation*}
where $\|\cdot\|_p$ stands for the $L_p$-norm, and 
\begin{equation*}
	f(\mathbf{z}) = \left(\frac{p}{2\alpha \Gamma(\frac{1}{o})}\right)^n e^{-\frac{1}{\alpha^p} \|\mathbf{z}\|_p^p}.
\end{equation*}
Define $R \triangleq w(\mathbf{Z})$ as a random variable. The resultant level set $\mathcal{Y}_r$ is an $L_p$-sphere centered at $\mathbf{x}_0$ with radius $r$, which is an equal-density contour in $\mathcal{Y}$. Meanwhile, the error ratio $\theta(r)$ defined in (\ref{eq:error_ratio}) for the $L_p$-sphere is independent from the SNR. This leads to another advantage that when we estimate the WER w.r.t. the SNR, $\theta(r)$ estimated in low SNR can be tabulated and used to accelerate the simulation for high SNR.

The p.d.f. $g(r)$ can be derived based on the following lemma.
\begin{lemma}
	\label{lemma:gamma}
	Let $\mathbf{Z} = \left[Z_1, Z_2, \cdots, Z_n\right]^T$, where $Z_i$'s are i.i.d. random variables following generalized Gaussian distribution.
	Let $X = \|\mathbf{Z}\|_p^p$. Then $X \sim \Gamma\left(\frac{n}{p}, \alpha^p\right)$ follows Gamma distribution and its cumulative density function (c.d.f.) and p.d.f. can be expressed as
	\begin{align}
		&F_X(x;n) = \gamma\left(\frac{x}{\alpha^p}, \frac{n}{p}\right), \qquad x\ge 0 \\
		&f_{X}(x;n) = \frac{1}{\Gamma(\frac{n}{p})\alpha^n} x^{\frac{n}{p}-1} e^{-\frac{x}{\alpha^p}}, \qquad x \ge 0,
	\end{align}
	respectively, where 
	\begin{equation*}
		\gamma(x,s) = \frac{1}{\Gamma(s)} \int_0^x t^{s-1} e^{-t} dt
	\end{equation*}
	is the regularized lower incomplete Gamma function.
\end{lemma}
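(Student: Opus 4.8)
The plan is to reduce the claim to two standard facts: the distribution of a single summand $|Z_i|^p$, followed by the convolution (additivity) property of the Gamma family. First I would determine the law of $W_i \triangleq |Z_i|^p$. Since the GGD density $f(z)$ is symmetric about zero, the absolute value $|Z_i|$ has density $\frac{p}{\alpha\Gamma(1/p)} e^{-v^p/\alpha^p}$ on $[0,\infty)$ (the factor of two from folding cancels the $\tfrac{1}{2}$ in the normalizing constant). Applying the monotone change of variables $W_i = |Z_i|^p$, so that $|Z_i| = W_i^{1/p}$ with Jacobian $\tfrac{1}{p} w^{1/p-1}$, gives
\begin{equation*}
	f_{W_i}(w) = \frac{1}{\Gamma(1/p)\,\alpha}\, w^{\frac{1}{p}-1} e^{-w/\alpha^p}, \qquad w \ge 0,
\end{equation*}
which I recognize as the $\Gamma\!\left(\tfrac{1}{p}, \alpha^p\right)$ density (shape $1/p$, scale $\alpha^p$), noting $(\alpha^p)^{1/p} = \alpha$ so that the constant matches.

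Next, since $X = \|\mathbf{Z}\|_p^p = \sum_{i=1}^n W_i$ is a sum of $n$ i.i.d.\ $\Gamma\!\left(\tfrac{1}{p}, \alpha^p\right)$ variables, I would establish the result through moment generating functions, which is the most economical route. The MGF of each $W_i$ is $(1 - \alpha^p t)^{-1/p}$ for $t < \alpha^{-p}$, so by independence the MGF of $X$ is the product $(1 - \alpha^p t)^{-n/p}$. This is exactly the MGF of a $\Gamma\!\left(\tfrac{n}{p}, \alpha^p\right)$ random variable, and by uniqueness of MGFs we conclude $X \sim \Gamma\!\left(\tfrac{n}{p}, \alpha^p\right)$.

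Finally, I would read off the stated formulas directly from the $\Gamma\!\left(\tfrac{n}{p},\alpha^p\right)$ density $\frac{1}{\Gamma(n/p)(\alpha^p)^{n/p}} x^{n/p-1} e^{-x/\alpha^p}$, using $(\alpha^p)^{n/p} = \alpha^n$ to match $f_X(x;n)$. The c.d.f.\ then follows by integrating and substituting $t = x/\alpha^p$, which collapses the remaining constants and yields the regularized lower incomplete Gamma function $\gamma\!\left(\tfrac{x}{\alpha^p}, \tfrac{n}{p}\right)$ exactly as defined in the statement.

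I do not anticipate a genuine obstacle, as every step is routine; the only points demanding care are the factor of two arising from the symmetry of the GGD when passing to $|Z_i|$, and correctly tracking the scale parameter so the normalizing constants reconcile to $\alpha^n$. If one wishes to avoid moment generating functions, the same additivity can be proved by induction on $n$ via the Gamma convolution identity (equivalently, a Beta-function integral), but the MGF argument is the shortest and I would present that.
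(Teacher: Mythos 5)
Your proposal is correct and takes essentially the same route as the paper: both arguments exploit independence to factor a transform of $X=\sum_i |Z_i|^p$ into $n$ identical terms and conclude by transform uniqueness that $X\sim\Gamma\left(\tfrac{n}{p},\alpha^p\right)$. The only cosmetic difference is that you first identify each $|Z_i|^p$ as $\Gamma\left(\tfrac{1}{p},\alpha^p\right)$ by a change of variables and then invoke the standard Gamma MGF and additivity, whereas the paper evaluates the characteristic function $E\left[e^{jt|Z_i|^p}\right]=(1-jt\alpha^p)^{-1/p}$ by direct integration against the GGD density.
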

\begin{proof}
	Let $X = \sum_{i=1}^n |Z_i|^p$. Its characteristic function can be derived as
	\begin{equation}
		\Phi(t) = E\left[ e^{jtX} \right] = \prod_{i=1}^{n} E\left[ e^{jt|Z_i|^p}\right] = \left(1-jt \alpha^p\right)^{-\frac{n}{p}},
		\label{eq:charac_func}
	\end{equation}
	where
	\begin{align*}
		E\left[ e^{jt|Z_i|^p}\right] &= \int_{-\infty}^{\infty} e^{jt|z|^p} \frac{p}{2\alpha \Gamma(\frac{1}{p})} e^{-\frac{|z|^p}{\alpha^p}}dz =  \frac{p}{\alpha \Gamma(\frac{1}{p})}\int_{0}^{\infty} e^{jtz^p} e^{-\frac{z^p}{\alpha^p}}dz \\
		&= \frac{p}{\alpha \Gamma(\frac{1}{p})} \int_{0}^{\infty} e^{-x} d\left(\frac{x}{\frac{1}{\alpha^p}-jt}\right)^{\frac{1}{p}} = \frac{1}{\alpha \Gamma(\frac{1}{p})} \left(\frac{\alpha^p}{1-jt\alpha^p}\right)^{\frac{1}{p}} \int_{0}^{\infty} e^{-x}x^{\frac{1}{p}-1} dx \\
		&= \left(1-jt \alpha^p\right)^{-\frac{1}{p}}
	\end{align*}
	The characteristic function (\ref{eq:charac_func}) indicates that $X$ follows Gamma distribution with shape parameter $\frac{n}{p}$ and scale parameter $\alpha^p$.
\end{proof}

As $R = X^{\frac{1}{p}}$, its c.d.f. is
\begin{align*}
	F_R(r) = \text{Pr}\left(R < r\right) = \text{Pr}\left(X < r^p\right) = F_{X}(r^p;n), \quad r \ge 0.
\end{align*}
Hence, the p.d.f. of $R$ can be derived as
\begin{align}
	g(r) = \frac{d}{dr} F_R(r) = f_{X}(r^p;n)\frac{d}{dr}r^p = \frac{p}{\Gamma(\frac{n}{p})\alpha^n}r^{n-1} e^{-\frac{r^p}{\alpha^p}}, \quad r \ge 0.
	\label{eq:pdf_R}
\end{align}

Following the framework in Section \ref{sec:optIS}, the proposed IS estimator for the AWGGN channel can be written as
\begin{equation}
	\hat{P}_e^{\text{IS}} = \frac{1}{N} \sum_{i=1}^N I(\mathbf{z}_i) \frac{g(\|\mathbf{z}_i\|_p)}{g^{\star}(\|\mathbf{z}_i\|_p)}, \quad \mathbf{z}_i \sim f^*(\mathbf{z}) ,
	\label{eq:ISest_GGN}
\end{equation}
where $I(\mathbf{z})$ returns 1 if $\mathbf{z}$ falls outside the decision region of $\mathbf{x}_0$ and 0 otherwise, $f^*(\mathbf{z}) = \frac{g^{\star}(\|\mathbf{z}\|_p)}{g(\|\mathbf{z}\|_p)} f(\mathbf{z}) $, and the optimal IS distribution $g^{\star}(r)$ is provided in Theorem \ref{th:optIS}.

To generate samples from the proposal IS distribution $f^*(\mathbf{z})$, we need to further derive the conditional distribution on $\mathcal{Y}_r$.
Denote $\nu(z) = |z|^p$ so that $w(\mathbf{z}) = \left(\sum_{i=1}^{n} \nu(z_i) \right)^{\frac{1}{p}}$. The multivalued inverse of $\nu$ is $\rho(u) = \pm u^{\frac{1}{p}}$. 
To get an 1-to-1 transformation, we partition $\mathcal{Y}$ into $2^n$ domains based on the sign of each element of $\mathbf{z}$, denoted as $\text{sgn}(z_i)$, for $i = 1,\cdots, n$. We consider the following transformation in each domain
\begin{align*}
	&\left\{ \begin{matrix*}[l]
		u_i = \frac{|z_i|^p}{\|\mathbf{z}\|_p^p}, \quad \text{for } i = 1,\cdots,n-1, \\
		r = \|\mathbf{z}\|_p,
	\end{matrix*}
	\right. \\
	\Rightarrow \quad 
	&\left\{ \begin{matrix*}[l]
		z_i = \text{sgn}(z_i) r u_i^{\frac{1}{p}}, \quad \text{for } i = 1,\cdots,n-1, \\
		z_{n} = \text{sgn}(z_n) r \left(1 - \sum_{i=1}^{n-1} u_i \right)^{\frac{1}{p}}.
	\end{matrix*}
	\right.
\end{align*}
The resultant Jacobian has a unified expression that does not depend on the domain
\begin{equation*}
	J(u_1, \cdots, u_{n-1}, r) = (r^p)^{n-1} \prod_{i=1}^{n} \rho'(r^p u_i) \cdot \frac{d}{dr}(r^p) = \frac{r^{n-1}}{p^{n-1}} \left(1- \sum_{i=1}^{n-1} u_i\right)^{\frac{1}{p}-1} \prod_{i=1}^{n-1} u_i^{\frac{1}{p}-1}.
\end{equation*} 

Therefore, the conditional distribution on $\mathcal{Y}_r$ is 
\begin{align}
	f(u_1,\cdots, u_{n-1}|r) = \frac{\Gamma(\frac{n}{p})}{2^n \Gamma^n(\frac{1}{p})} \left(1- \sum_{i=1}^{n-1} u_i\right)^{\frac{1}{p}-1} \prod_{i=1}^{n-1} u_i^{\frac{1}{p}-1},
	\label{eq:pdf_ur}
\end{align}
which is independent of $r$. This means we can generate a sample on unit $L_p$-sphere first and map it to $\mathcal{Y}_r$ by multiplying $r$ to generate a sample from $\mathcal{Y}_r$.

As we mentioned before, the optimal $g^{\star}(r)$ contains $\theta(r)$ that needs to be estimated itself. In practice, we need to estimate $\theta(r)$ and $P_e$ iteratively. However, as $\mathcal{R}$ is the non-negative real line for the chosen mapping function, it is impossible to estimate infinite number of $\theta(r)$'s. Hence, we quantize the space $\mathcal{R}$ and use a histogram to approximate the optimal IS distribution. 

Set a range $[r_{\text{min}},r_{\text{max}}]$ within $\mathcal{R}$, so that the probabilities $\text{Pr}(R<r_{\text{min}})$ and $\text{Pr}(R \ge r_{\text{max}})$ are negligible. We partition the observation space $\mathcal{Y}$ into $m$ shells
\begin{equation*}
	\mathcal{Y}_l = \left\{\mathbf{z} \in \mathcal{Y} : r_{l-1} \le \|\mathbf{z}\|_p < r_l \right\}, \text{ for } l = 1,2,\dots,m,
\end{equation*}
where $r_l = r_{\text{min}} + l\Delta r$, and $\Delta r = r_{l}-r_{l-1} = \frac{1}{m} \left(r_{\text{max}} - r_{\text{min}}\right)$ is the thickness of each shell $\mathcal{Y}_l$. 
Let $P_l = \int_{r_{l-1}}^{r_l} g(r) dr$ and $\theta_l = \text{Pr}\left(I(\mathbf{Z}) = 1 | \mathbf{Z} \in\mathcal{Y}_l \right)$. 
As $\Delta r$ tends to 0, the following approximations hold
\begin{align*}
	& P_l = \int_{r_{l-1}}^{r_l} g(r) dr \simeq g(r_l) \Delta r,\\
	& \theta_l = \frac{\int_{r_{l-1}}^{r_l} \theta(r) g(r)dr}{\int_{r_{l-1}}^{r_l} g(r) dr} \simeq \theta(r_l).
\end{align*} 
Denote $P^*_l$, for $l = 1,2,\dots,m$, as the actual probability mass function (p.m.f.) adopted on the space $\mathcal{R}$ in the IS simulation. 
The optimal p.m.f. according to Theorem \ref{th:optIS} can be derived as
\begin{equation}
	P^*_l = \int_{r_{l-1}}^{r_l} \frac{ \sqrt{\theta(r)}  g(r)}{\int_{r_{\text{min}}}^{r_{\text{max}}} \sqrt{\theta(r)} g(r)dr} dr= \frac{\sqrt{\theta_l} P_l}{\sum_{i=1}^{l} \sqrt{\theta_l} P_l}.
	\label{eq:optIS_pmf}
\end{equation}
Consequently, the implemented IS estimator can be written as
\begin{equation}
	\hat{P}_e^{\text{IS}} = \frac{1}{N} \sum_{i=1}^{N} I(\mathbf{z}_i) \frac{g(\|\mathbf{z}_i\|_p) \Delta r}{P^*_{l(\mathbf{z}_i)}}, \quad \mathbf{z}_i \sim f^*(\mathbf{z}),
	\label{eq:WER_IS_est}
\end{equation}
where $l(\mathbf{z}_i)$ returns the index of the shell $\mathcal{Y}_l$ that $\mathbf{z}_i$ falls in, and $f^*(\mathbf{z}) = \frac{P^*_{l(\mathbf{z})}}{g(\|\mathbf{z}\|_p)\Delta r}f(\mathbf{z}) $. 

In summary, the sample generation from the above IS distribution $f^*(\mathbf{z})$ can be summarized as follows.
\begin{itemize}
	\item[1. ] An index $l'$ is randomly generated according to the p.m.f. $\{P_l^*\}_{l=1}^{m}$. Then, a radius $r'$ is uniformly drawn from the interval $[r_{l'-1},r_{l'})$.
	\item[2. ] Draw $n$ independent samples $z'_1,z'_2,\cdots, z'_n$ from $f(z)$ and set $u_i = \frac{|z'_i|^p}{\sum_{j=1}^{n} |z'_j|^p} $, for $i = 1,2, \cdots, n$.
	\item[3. ] A noise vector $\mathbf{z} = [z_1,\cdots, z_n]^T$ following $f^*(\mathbf{z})$ is drawn by setting $z_i = b_i \cdot r' u_i^{\frac{1}{p}}$, for $i=1,\cdots,n$, where $b_i$ is uniformly drawn from $\{-1,+1\}$.
\end{itemize}

In Algorithm \ref{algo:proposed}, we summarize the above results and present more details of the proposed IS algorithm.
Initially, the sample size counter {\fontfamily{phv}\selectfont{N\_tot}} is set as 0 and the relative error is set as 1. Since no side information about the coding scheme is provided, we assume that no errors can be corrected (i.e., $\hat{\theta}_l^{(0)} = 1$, for $l =1 ,2,\dots,m$). 

For the $T$-th iteration, the IS p.m.f. is firstly updated as
\begin{equation}
	P_l^{*(T)} = \frac{\sqrt{\hat{\theta}^{(T-1)}_{l}}g(r_l)}{\sum_{j=1}^m \sqrt{\hat{\theta}^{(T-1)}_{j}}g(r_{j})}, \quad \text{for } l=1,2,\cdots,m.
	\label{eq:ISpdf_opt_T}
\end{equation}
New samples are generated based on the updated p.m.f. in (\ref{eq:ISpdf_opt_T}).

Next, since the error ratios are SNR-invariant, they are updated based on all the generated samples as
\begin{equation}
	\hat{\theta}^{(T)}_l = \frac{\sum_{i=1}^N I_l(\mathbf{z}_i) I(\mathbf{z}_i)}{\sum_{i=1}^N I_l(\mathbf{z}_i)}, \quad \text{for } l = 1,2,\cdots, m,
	\label{eq:error_ratio_T}
\end{equation}
where $I_l(\mathbf{z})$ is an indicator function which returns 1 if $\mathbf{z} \in \mathcal{Y}_l$  and 0 otherwise. The WER is estimated inside the loop until the relative error (\ref{eq:SysM_relative_error}) meets the reliability requirement or the maximum number of iteration is reached.

In order to avoid the violation due to the insufficient number of samples during the first several iterations, we set a minimum number of samples {\fontfamily{phv}\selectfont{N\_min}} for $P_e$ and $\theta_l$ to start the update. At the same time, a step size {\fontfamily{phv}\selectfont{N\_step}} is defined to control the update frequency.

\begin{algorithm}[!h]
	\label{algo:proposed}
	\SetKwInput{KwData}{Input}
	\SetKwInput{KwResult}{Output}
	\SetAlgoLined
	\caption{$L_p$-Norm-Based IS Algorithm for the AWGGN Channel}
	\KwData{$E_s/N_0$ and relative error {\fontfamily{phv}\selectfont{re}}}
	\KwResult{WER $\hat{P}_e$ and the sample size {\fontfamily{phv}\selectfont{N\_tot}}}
	
	\textbf{Initialization}: {\fontfamily{phv}\selectfont{N\_tot}} := 0, {\fontfamily{phv}\selectfont{WERre}} := 1, initialize $\{P^*_l\}_{l=1}^m$ with (\ref{eq:ISpdf_opt_T}), where all the error ratios are set to be 1\; 
	\While{{\fontfamily{phv}\selectfont{WERre}} $>$ {\fontfamily{phv}\selectfont{re}}}{
		Generate a radius $r$ based on the p.m.f. $\{P^*_l\}_{l=1}^m$ \;
		Draw a noise vector $\mathbf{z}$ uniformly from the $L_p$-sphere with radius $r$\;
		Set $\mathbf{y} = \mathbf{x}_0+ \mathbf{z}$ and pass it through the decoder \; 
		\If{{\fontfamily{phv}\selectfont{N\_tot}} $>$ {\fontfamily{phv}\selectfont{N\_min}}}{
			Compute $\hat{P}_e$ according to (\ref{eq:WER_IS_est})\;
			Compute {\fontfamily{phv}\selectfont{WERre}} according to (\ref{eq:SysM_relative_error})\;
			\If{mod({\fontfamily{phv}\selectfont{N\_tot}}, {\fontfamily{phv}\selectfont{N\_step}}) == 0}{
				Update $\hat{\theta}_l$ with (\ref{eq:error_ratio_T})\;
				Update $\{P^*_l\}_{l=1}^m$ with (\ref{eq:ISpdf_opt_T})\;
			}
		}
		{\fontfamily{phv}\selectfont{N\_tot}} := {\fontfamily{phv}\selectfont{N\_tot}} + 1\;
	}
	\textbf{return} $\hat{P}_e$ and {\fontfamily{phv}\selectfont{N\_tot}}\;
	
\end{algorithm}

For high SNR, we know that errors near the decision region dominate the error performance. Reflected in the proposed IS algorithm, almost all of these errors fall inside shells $\mathcal{Y}_l$ with $r_l$ that is close to the packing radius. Usually, $\theta_l$'s for these dominant shells are significantly small. If there are no errors generated, some of them may become 0 after the update with (\ref{eq:error_ratio_T}). The corresponding $P^*_l$'s will be frozen if these $\theta_l$'s are used in (\ref{eq:ISpdf_opt_T}), and no samples will be further generated from these shells.
As $\theta(r)$ is a monotonically increasing function w.r.t. $r$, when we update $P^*_l$'s, we set these zero $\theta_l$'s as the values of their nearest nonzero $\theta_l$'s with larger $l$, respectively. This ensures the convergence of the IS distribution and the unbiasedness of the estimator.

On the other hand, we can see from (\ref{eq:error_ratio_T}) that the estimation for the error ratios is still based on the MC method. If we can design a mapping function based on some side information about the coded system so that the resultant error ratios for these dominant shells are larger, the efficiency of the IS estimator in terms of the sample size can be further improved.

\section{Sphere Bound for AWLN Channel}
\label{sec:sb}
In this section, we investigate the PEP conditioned on the $L_p$-sphere, which plays an essential role in the efficiency analysis of the proposed IS estimator in Section \ref{sec:ISgain}. On the other hand, the conditional PEP is also of crucial significance in the derivation of sphere bounds on the MLD performance of the linear block code for memoryless continuous channels. Specifically for the AWLN channel, we derive the conditional PEP in a closed-form expression. The corresponding sphere bound is thus derived, where the radius of the $L_1$-sphere is optimized to tighten the bound.

For memoryless continuous channels, given a mapping function $w:\mathcal{Y} \mapsto \mathcal{R}$ and choose $\mathcal{Y}_{r'}$ with parameter $r'$ as the Gallager region, the GFBT-based bound in (\ref{eq:GFBT}) can be written as
\begin{align}
	P_e \le \text{Pr}(\mathbf{Y} \in \mathcal{E} \cap \mathcal{G}) + \text{Pr}(\mathbf{Y} \notin \mathcal{G}) = \int_{-\infty}^{r'} \theta(r) g(r) dr + \int_{r'}^{\infty} g(r) dr.
	\label{eq:gfbt_sb}
\end{align}
By applying the union bound to the first term on the right-hand side in (\ref{eq:gfbt_sb}), the error ratio can be upper bounded by
\begin{equation*}
	\theta(r) \le \sum_{d = 1}^{n} A_d \text{Pr} (\mathbf{c}_0 \rightarrow \mathbf{c}_d | \mathbf{Y} \in \mathcal{Y}_r),
\end{equation*}
where $\text{Pr}(\mathbf{c}_0 \rightarrow \mathbf{c}_d | \mathbf{Y} \in \mathcal{Y}_r)$ is the PEP between the transmitted all-zero codeword $\mathbf{c}_0$ and a weight-$d$ codeword $\mathbf{c}_d$ conditioned on $\mathcal{Y}_r$ and $A_d$ is the number of weight-$d$ codewords. 

By setting the derivative of the right-hand side in (\ref{eq:gfbt_sb}) w.r.t. $r'$ to 0, the optimal $r'$ that tightens the bound can be determined, which is the root of the following equation
\begin{equation*}
	\sum_{d = 1}^{n} A_d \text{Pr} (\mathbf{c}_0 \rightarrow \mathbf{c}_d | \mathbf{Y} \in \mathcal{Y}_{r'})= 1.
\end{equation*}
Therefore, the general sphere bound for the memoryless continuous channel based on the mapping function $w(\cdot)$ can be written as
\begin{equation}
	P_e = \int_{\mathcal{R}} \theta(r) g(r) dr \le \int_{\mathcal{R}} \min \left(1, \sum_{d = 1}^{n} A_d \text{Pr} (\mathbf{c}_0 \rightarrow \mathbf{c}_d | \mathbf{Y} \in \mathcal{Y}_r)\right) g(r) dr.
\end{equation}

Several GFBT-based bounds can be expressed in this form. For example, the sphere bounds for BSCs and the AWGN channel \cite{herzberg1994techniques} choose the Hamming weight and the $L_2$-norm of the noise vector as the mapping functions, respectively. Their conditional PEPs are therein derived first in the derivation of the bounds.
The Gallager region of the tangential sphere bound \cite{poltyrev1994bounds} is a circular cone whose central line passes through the origin and the transmitted signal vector.
The contour $\mathcal{Y}_r$ is the intersection of the cone and the hyperplane orthogonal to the central line, which is an $(n-2)$-dimensional sphere centered at the transmitted signal vector with radius $r$. The corresponding mapping function returns the altitude and the radius of the cone.

\subsection{PEP Conditioned on the $L_p$ Sphere}
For an AWGGN channel with noise variance $\sigma^2$, assume BPSK modulation is applied with unit signal energy. Without loss of generality, assume that the all-zero codeword $\mathbf{c}_0$ is transmitted. The channel can be described as
\begin{equation*}
	\mathbf{Y} = (2\mathbf{c}_0-1) + \mathbf{Z},
\end{equation*}
where $\mathbf{Z}$ is a random vector with each term following GGD with shape parameter $p$ and scale parameter $\alpha$.
The p.d.f. for the received vector $\mathbf{y}$ conditioned on $\mathbf{c}_0$ can be written as
\begin{equation}
	f(\mathbf{y}|\mathbf{c}_0) = \left(\frac{p}{2\alpha \Gamma(\frac{1}{p})}\right)^n e^{-\frac{1}{\alpha^p} \|\mathbf{y} - (2\mathbf{c}_0-1)\|_p^p}.
\end{equation}

Under the MLD, the log-likelihood ratio between $\mathbf{c}_0$ and a weight-$d$ codeword $\mathbf{c}_d$ for a given received vector $\mathbf{y}$ is
\begin{align}
	\log \frac{f(\mathbf{y}|\mathbf{c}_0)}{f(\mathbf{y}|\mathbf{c}_d)} &=\frac{1}{\alpha^p} \left( \|\mathbf{y} - (2\mathbf{c}_d-1)\|_p^p - \|\mathbf{y} - (2\mathbf{c}_0-1)\|_p^p \right) \nonumber \\
	&= \frac{1}{\alpha^p} \left(\|\mathbf{z}-2\|_p^p - \|\mathbf{z}\|_p^p\right).
\end{align} 
An error occurs if the log-likelihood ratio is smaller than 0.
Denote $\Delta_d$ as the decoding metric, which is expressed in terms of $Z_i$'s as
\begin{equation}
	\Delta_d  = \sum_{i=1}^d |Z_i-2|^p - |Z_i|^p.
\end{equation} 
The PEP between $\mathbf{c}_0$ and $\mathbf{c}_d$ can be denoted as $\text{Pr} (\Delta_d <0)$.

Consider $w(\mathbf{z}) = \|\mathbf{z}\|_p$ as the mapping function. 
The sphere bound for the AWGGN channel can be written as
\begin{equation}
	P_e \le \int_{0}^{\infty} \min \left(1, \sum_{d = 1}^{n} A_d \text{Pr}(\Delta_d < 0| \|\mathbf{Z}\|_p = r)\right) g(r) dr,
	\label{eq:sb_Lp}
\end{equation}
where the PEP conditioned on the $L_p$-sphere is
\begin{equation}
	\text{Pr}\left( \left. \Delta_d < 0 \right| \|\mathbf{Z}\|_p = r \right) = \underbrace{\int \cdots \int}_d I\left(\Delta_d <0\right) f\left(z_1,\cdots,z_d | \sum_{i=1}^{n}|z_i|^p = r^p\right) dz_1,\cdots dz_d,
	\label{eq:condition_PEP}
\end{equation}
and the conditional distribution can be derived based on Lemma \ref{lemma:gamma} as
\begin{align}
	f\left(z_1,\cdots,z_d | X = r^p\right) &= \frac{f(z_1,\cdots,z_d)f(\sum_{i=1}^{n}|z_i|^p=r^p|z_1,\cdots,z_d) }{f(\sum_{i=1}^{n}|z_i|^p=r^p)} \nonumber\\
	& =\frac{\prod_{i=1}^d f(z_i) f_{X}\left(r^p-\sum_{i=1}^d |z_i|^p; n-d \right)}{f_{X}(r^p;n)} \nonumber\\
	& =  \frac{p^d \Gamma(\frac{n}{p}) r^{-d}}{2^d\Gamma(\frac{n-d}{p}) \Gamma^d(\frac{1}{p})} \left(1- \frac{\sum_{i=1}^d |z_i|^p}{r^p}\right) ^{\frac{n-d}{p}-1}, \\
	& \hspace{-20pt} -r \le z_i \le r, \text{ for } i = 1,\cdots,d, \text{ and } \sum_{i=1}^d |z_i|^p \le r^p. \nonumber
\end{align}

The multiple integral in (\ref{eq:condition_PEP}) is hard to solve, except for the $p=2$ case. This dues to the spherical symmetry of the AWGN channel, which makes both the received vector and the transmitted signal vector lie on the first coordinate. At the same time, the decision boundary for the AWGN channel of the PEP is a hyperplane.
Based on these properties, Herzberg and Poltyrev find out that the conditional PEP equals the surface area ratio of the spherical cap cut out by the decision hyperplane to the whole sphere in the geometric interpretation. They derive the sphere bound for the AWGN channel in \cite{herzberg1994techniques}. But for the other choices of $p$, these kinds of properties do not hold anymore.

\subsection{Conditional PEP for the AWLN Channel}
Although there is no simple geometric interpretation for the AWLN channel, if we can derive the p.d.f. of the metric $\Delta_d$ conditioned on the radius $r$, then the conditional PEP in (\ref{eq:condition_PEP}) can be solved mathematically.

The decoding metric for $p=1$ can be written as
\begin{equation}
	\Delta_d = \sum_{i=1}^{d} |Z_i-2| - |Z_i|.
	\label{eq:delta_d}
\end{equation}

Assume the numbers of $Z_i$'s that satisfy $0\le Z_i \le2$, $Z_i >2$ and $Z_i <0$ are $d_0$, $d_2$ and $d-d_2-d_0$, respectively. Since $Z_i$'s are i.i.d., denote $D$ as the event that $0 \le Z_i \le2$ for $i = 1,2,\cdots,d_0$, $Z_i > 2$ for $i = d_0+1,\cdots, d_2+d_0$ and $Z_i < 0$ for $i = d_2+d_0+1,\cdots, d$. Then the metric can be further expressed as
\begin{equation*}
	\Delta_d = \sum_{i=1}^{d_0} Z_i -d+2d_2.
\end{equation*}

The conditional PEP in (\ref{eq:condition_PEP}) for the AWLN channel can be written as
\begin{equation}
		\text{Pr}\left(\Delta_d<0 | \|\mathbf{Z}\|_1 = r\right) = \sum_{d_0 = 0}^d \sum_{d_2 = 0}^{d-d_0} \binom{d}{d_0} \binom{d-d_0}{d_2} \text{Pr} \left(\Delta_d<0, D | r\right),
	\label{eq:condition_PEP_p1}
\end{equation}
where the joint probability for $\Delta_d < 0$ and $D$ conditioned on $r$ is 
\begin{equation}
		\text{Pr} \left(\Delta_d<0,D| r\right) = \underbrace{\int \hspace{-3pt} \cdots \hspace{-3pt} \int}_d H\left(\sum_{i=1}^{d_0}z_i - d+2d_2\right) f(z_1,\cdots,z_d, D|r) dz_1\cdots dz_d,
	\label{eq:condition_subPEP_p1}
\end{equation}
$H(\cdot)$ is the Heaviside step function defined as
\begin{equation*}
	\begin{aligned}
		H(x) = \left\{ \begin{matrix}
			&0, & x<0 \\
			&1/2, & x=0 \\
			&1, &x>0
		\end{matrix}\right.,
	\end{aligned}
\end{equation*}
and the conditional distribution of $Z_1, \cdots, Z_d$ is
\begin{align*}
	&f(z_1,\cdots, z_{d},D|r) = \frac{\Gamma(n)r^{-d}}{2^d\Gamma(n-d)}\left(1 - \frac{\sum_{i=1}^{d} |z_i|}{r}\right)^{n-d-1}, \\
	& \hspace{40pt} \left\{\begin{matrix*}[l]
		&0 \le z_i \le 2, 	&\text{ for } i = 1,2,\cdots, d_0\\
		& 2 <z_i \le r, 			&\text{ for } i  = d_0+1, \cdots, d_0+d_2\\
		& z_i < 0, 							&\text{ for } i = d_0+d_2+1, \cdots, d
	\end{matrix*} \right., \text{ and } \sum_{i=1}^d |z_i| \le r.
\end{align*}
One remark is that we assume the probability for the received vector being decoded to $\mathbf{c}_d$ when $\Delta_d = 0$ is $1/2$ as $\text{Pr}(\Delta_d=0) \neq 0$ for the AWLN channel. 

The marginal distribution of $Z_1, \cdots, Z_{d_0}$ after integrating over all $Z_i$'s that are smaller than 0 or greater than 2 is 
\begin{align*}
	&f(z_1, \cdots, z_{d_0},D|r) = \frac{\Gamma(n)r^{-d_0}}{2^d \Gamma(n-d_0)} \left(1-\frac{2d_2}{r} - \frac{\sum_{i=1}^{d_0}z_i}{r}\right)^{n-d_0-1},\\
	& \hspace{90pt} 0 \le z_i \le 2, \text{ for } i = 1, 2, \cdots, d_0, \text{ and } \sum_{i=1}^{d_0} z_i \le r-2d_2.
\end{align*}

In the following theorem, we get the p.d.f. of $\sum_{i=1}^{d_0} Z_i$ based on the above marginal distribution and derive the conditional PEP for the AWLN channel in a closed-form expression.
 
\begin{theorem}
	\label{th:sb_l1}
	Consider a binary linear block code with block length $n$ transmit over the AWLN channel with BPSK modulation. For any two codewords with Hamming distance $d$, their pairwise error probability conditioned on the $L_1$-sphere centered at one of them with radius $r$ is
	\begin{align}
		&\text{\normalfont Pr}(\Delta_d<0 |\|\mathbf{Z}\|_1 \hspace{-3pt} =r) = \frac{1}{2^d}\sum_{d_0 = 1}^d \binom{d}{d_0}  \frac{\Gamma(n)r^{1-n}}{\Gamma(d_0)\Gamma(n-d_0)} \sum_{d_2 = 0}^{d-d_0} \binom{d-d_0}{d_2} \hspace{-5pt} \sum_{m=0}^{n-d_0-1} \hspace{-5pt} \binom{n-d_0-1}{m}   \nonumber \\
		&\hspace{30pt} \cdot \sum_{l=0}^{d_0} \binom{d_0}{l} \frac{(-1)^{n+d_0-1-m-l}}{n-1-m} (\overline{x}-2l)^{n-1} \left( H(\overline{x}-2l) - \left(\frac{\underline{x}-2l}{\overline{x}-2l} \right)^{n-1-m} \hspace{-10pt} H(\underline{x}-2l)\right)  \nonumber \\
		&\hspace{230pt} +\frac{1}{2^d}\sum_{d_2=0}^{d} \binom{d}{d_2} \left(\frac{\overline{x}}{r} \right)^{n-1} H(\overline{x}) H(-\underline{x}),
		\label{eq:condtion_PEP_p1_res}
	\end{align}
	where $\Gamma(n) = (n-1)!$ is a factorial function for positive integers, $\overline{x} = r-2d_2$ and $\underline{x} = d-2d_2$.
\end{theorem}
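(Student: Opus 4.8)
The plan is to reduce the $d$-fold integral in (\ref{eq:condition_subPEP_p1}) to the distribution of the single random variable $S\triangleq\sum_{i=1}^{d_0}Z_i$ and then evaluate a one-dimensional integral in closed form. The key observation is that the marginal density $f(z_1,\dots,z_{d_0},D\,|\,r)$ derived just above the theorem depends on $(z_1,\dots,z_{d_0})$ only through the sum $s=\sum_{i=1}^{d_0}z_i$, being equal to $\frac{\Gamma(n)}{2^d\Gamma(n-d_0)r^{n-1}}(\overline{x}-s)^{n-d_0-1}$ on the box $[0,2]^{d_0}$. Hence the density of $S$ factors as this common value times the slice volume $V_{d_0}(s)=\int_{[0,2]^{d_0}}\delta\big(s-\sum_i z_i\big)\prod_i dz_i$, i.e. the $d_0$-fold self-convolution of the indicator of $[0,2]$. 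First I would record the standard Irwin--Hall-type inclusion--exclusion formula
\[
V_{d_0}(s)=\frac{1}{\Gamma(d_0)}\sum_{l=0}^{d_0}(-1)^l\binom{d_0}{l}(s-2l)^{d_0-1}H(s-2l),
\]
which is exactly where the index $l$, the binomial $\binom{d_0}{l}$, and the prefactor $1/\Gamma(d_0)$ in (\ref{eq:condtion_PEP_p1_res}) originate; this combinatorial identity is the main structural input.

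Next I would translate the error event into a constraint on $S$. On the event $D$ the decoding metric (\ref{eq:delta_d}) collapses to an affine function of $S$, so that $\Delta_d<0$ is equivalent to $S>\underline{x}$, matching precisely the Heaviside factor $H(\sum_{i=1}^{d_0}z_i-d+2d_2)$ in (\ref{eq:condition_subPEP_p1}). Since $V_{d_0}$ is supported on $[0,2d_0]$ and the factor $(\overline{x}-s)^{n-d_0-1}$ forces $s\le\overline{x}$, the sub-PEP becomes
\[
\text{\normalfont Pr}(\Delta_d<0,D\,|\,r)=\frac{\Gamma(n)r^{1-n}}{2^d\Gamma(d_0)\Gamma(n-d_0)}\sum_{l=0}^{d_0}(-1)^l\binom{d_0}{l}\int_{\underline{x}}^{\overline{x}}(s-2l)^{d_0-1}(\overline{x}-s)^{n-d_0-1}H(s-2l)\,ds,
\]
the upper limit being safely taken as $\overline{x}$ because the portion $s>2d_0$ contributes nothing.

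The core computation is the elementary integral $\int(s-2l)^{d_0-1}(\overline{x}-s)^{n-d_0-1}ds$. I would write $\overline{x}-s=(\overline{x}-2l)-(s-2l)$ and apply the binomial theorem, so the integrand becomes $\sum_{m=0}^{n-d_0-1}\binom{n-d_0-1}{m}(-1)^{n-d_0-1-m}(\overline{x}-2l)^{m}(s-2l)^{n-2-m}$; each monomial integrates to $(\overline{x}-2l)^m(s-2l)^{n-1-m}/(n-1-m)$, which produces both the sum over $m$ with $\binom{n-d_0-1}{m}$ and the denominator $n-1-m$ in (\ref{eq:condtion_PEP_p1_res}). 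Evaluating between $\underline{x}$ and $\overline{x}$, the upper endpoint yields $(\overline{x}-2l)^{n-1}$ and the lower endpoint yields $(\overline{x}-2l)^m(\underline{x}-2l)^{n-1-m}$, i.e. the bracket $H(\overline{x}-2l)-\big((\underline{x}-2l)/(\overline{x}-2l)\big)^{n-1-m}H(\underline{x}-2l)$ once Heaviside factors record which branch is active at each endpoint; combining $(-1)^l$ from $V_{d_0}$ with $(-1)^{n-d_0-1-m}$ gives the sign $(-1)^{n+d_0-1-m-l}$ (the two exponents differ by the even number $2d_0-2l$). Reinstating $\binom{d}{d_0}\binom{d-d_0}{d_2}$ from (\ref{eq:condition_PEP_p1}) and summing over $d_2$ then reproduces the first block of (\ref{eq:condtion_PEP_p1_res}).

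Finally I would treat the degenerate case $d_0=0$ separately, since then $\Gamma(d_0)$ diverges and $S\equiv0$ is deterministic: here $\Delta_d<0$ reduces to $\underline{x}<0$, all $d$ coordinates split into $d_2$ above $2$ and $d-d_2$ below $0$ (whence $\binom{d}{d_2}$), and a single integration of the $(\overline{x}-s)^{n-1}$ mass yields the factor $(\overline{x}/r)^{n-1}$ gated by $H(\overline{x})H(-\underline{x})$, giving the last line of (\ref{eq:condtion_PEP_p1_res}). \emph{The main obstacle is the Heaviside bookkeeping}: for each branch $l$ and each endpoint one must track whether $s-2l$, $\overline{x}$, and $\underline{x}$ are positive, and verify that the finite-difference identity $\sum_{l}(-1)^l\binom{d_0}{l}(\overline{x}-2l)^{d_0-1}=0$ annihilates the spurious contributions when $\overline{x}>2d_0$ or $\underline{x}<0$, so that the compact indicator pattern of (\ref{eq:condtion_PEP_p1_res}) is exactly recovered.
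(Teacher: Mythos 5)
Your proof is correct, and it reaches the theorem by a genuinely different route from the paper at the one technically hard step. Both arguments share the same skeleton: split according to $(d_0,d_2)$ as in (\ref{eq:condition_PEP_p1}), reduce everything to the law of the partial sum $X_{d_0}=\sum_{i=1}^{d_0}Z_i$ on the event $D$, integrate that law over $[\underline{x},\overline{x}]$, and handle $d_0=0$ separately. The difference is how the density of $X_{d_0}$ is obtained. The paper computes its two-sided Laplace transform by iterating the closed-form integral $\xi_k(a)=\int_0^2 e^{-sx}(a-x)^k\,dx$ over the box $[0,2]^{d_0}$, and then inverts it by contour integration, the residue theorem, and Jordan's lemma. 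You instead exploit the fact that the conditional density depends on $(z_1,\dots,z_{d_0})$ only through their sum, so that the law of $X_{d_0}$ factors as $\frac{\Gamma(n)r^{1-n}}{2^d\Gamma(n-d_0)}(\overline{x}-s)^{n-d_0-1}$ times the box-slice volume $V_{d_0}(s)$, for which you import the Irwin--Hall inclusion--exclusion formula; this makes the origin of the $l$-sum, the binomial $\binom{d_0}{l}$, and the prefactor $1/\Gamma(d_0)$ immediate. Expanding $(\overline{x}-s)^{n-d_0-1}$ around $s-2l$ then reproduces term-for-term the paper's inverse-transform expression for $f_{X_{d_0}}$ (your sign $(-1)^{n-d_0-1-m+l}$ equals the paper's $(-1)^{n+d_0-1-m-l}$ since the exponents differ by the even number $2(d_0-l)$), and the endpoint evaluation yielding the bracket with $H(\overline{x}-2l)$ and $H(\underline{x}-2l)$ is the same computation in both proofs. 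Your $d_0=0$ term is also right: on $D$ the error event $\Delta_d<0$ is $S>\underline{x}$ with $S\equiv 0$, giving $H(-\underline{x})$, and the marginal probability of $D$ given $r$ supplies $(\overline{x}/r)^{n-1}H(\overline{x})$. What your route buys is elementarity and transparency --- no complex analysis, every combinatorial factor traceable to the convolution identity; what it costs is that you must supply (or cite) the Irwin--Hall formula and do the Heaviside bookkeeping explicitly, including the finite-difference identity that annihilates the range $s>2d_0$ --- exactly the care you flag at the end, and the same bookkeeping the paper's residue computation absorbs implicitly.
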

\begin{proof}
	The proof is given in Appendix.
\end{proof}

The p.d.f. of $R$ for $p=1$ is 
\begin{equation}
	g(r) = \frac{2^{\frac{n}{2}}}{\Gamma(n) \sigma^n} r^{n-1} e^{-\frac{\sqrt{2}r}{\sigma}}, \quad r \ge 0.
	\label{eq:pdf_p1}
\end{equation}
Therefore, the sphere bound for AWLN channel can be obtained.
\begin{corollary}
	\label{coro:sb_laplace}
	The MLD error probability of any linear block code for the AWLN channel with BPSK modulation is upper bounded by
	\begin{equation}
		P_e \le \frac{1}{\Gamma(n) \alpha^n} \int_0^{\infty} \min\left(1, \sum_{d=d_{\text{\normalfont min}}}^n A_{d} \text{\normalfont Pr} \left(\Delta_d < 0 | \|\mathbf{Z}\|_1 = r \right) \right)   r^{n-1} e^{-\frac{r}{\alpha}}dr,
		\label{eq:WER_sb_p1}
	\end{equation}
	where $d_{\text{min}}$ is the minimum distance of the code and $\text{\normalfont Pr} \left(\Delta_d < 0 | \|\mathbf{Z}\|_1 = r \right)$ is given in (\ref{eq:condtion_PEP_p1_res}).
\end{corollary}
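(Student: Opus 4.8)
The plan is to obtain Corollary \ref{coro:sb_laplace} as a direct specialization of the general AWGGN sphere bound in (\ref{eq:sb_Lp}) to the Laplace case $p=1$. Since (\ref{eq:sb_Lp}) was already established via Gallager's first bounding technique with the $L_p$-norm as the mapping function, and since Theorem \ref{th:sb_l1} already supplies the conditional PEP $\text{Pr}(\Delta_d < 0 \mid \|\mathbf{Z}\|_1 = r)$ in closed form, the proof reduces to assembling these ingredients and substituting the correct radial density $g(r)$.

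First I would set $p=1$ in (\ref{eq:sb_Lp}), so that the mapping function becomes $w(\mathbf{z}) = \|\mathbf{z}\|_1$ and the generic conditional PEP $\text{Pr}(\Delta_d < 0 \mid \|\mathbf{Z}\|_p = r)$ specializes to the AWLN expression (\ref{eq:condtion_PEP_p1_res}). Next I would substitute the density of $R = \|\mathbf{Z}\|_1$. By Lemma \ref{lemma:gamma}, $\|\mathbf{Z}\|_1$ follows a Gamma law, and evaluating the general density (\ref{eq:pdf_R}) at $p=1$ gives $g(r) = \frac{1}{\Gamma(n)\alpha^n} r^{n-1} e^{-r/\alpha}$ for $r \ge 0$, which is exactly (\ref{eq:pdf_p1}) rewritten through the Laplace-case relation $\alpha = \sigma/\sqrt{2}$. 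Plugging this $g(r)$ into (\ref{eq:sb_Lp}) produces the integrand $\min\!\left(1, \sum_{d} A_d \,\text{Pr}(\Delta_d < 0 \mid \|\mathbf{Z}\|_1 = r)\right) r^{n-1} e^{-r/\alpha}$ together with the prefactor $\frac{1}{\Gamma(n)\alpha^n}$.

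Finally, I would restrict the summation range. Because a binary linear block code has no nonzero codewords of weight below its minimum distance, $A_d = 0$ for all $d < d_{\min}$, so the sum over $d$ from $1$ to $n$ in (\ref{eq:sb_Lp}) may be replaced by a sum from $d_{\min}$ to $n$ without changing its value, yielding (\ref{eq:WER_sb_p1}). Since all the analytical difficulty is concentrated in the closed-form conditional PEP of Theorem \ref{th:sb_l1} (deferred to the Appendix) and in the earlier GFBT derivation of (\ref{eq:sb_Lp}), the corollary itself is essentially a bookkeeping step. The only point demanding care is the consistent use of the scale parameter: I would verify that the $p=1$ evaluation of (\ref{eq:pdf_R}) and the independently stated (\ref{eq:pdf_p1}) agree under $\alpha = \sigma\sqrt{\Gamma(1/p)/\Gamma(3/p)}$ at $p=1$, namely $\alpha = \sigma/\sqrt{2}$, so that both the exponent $-r/\alpha$ and the normalizing constant $1/(\Gamma(n)\alpha^n)$ appear correctly in the final bound.
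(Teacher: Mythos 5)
Your proposal is correct and follows exactly the paper's own route: the paper likewise obtains the corollary by specializing the general $L_p$ sphere bound (\ref{eq:sb_Lp}) to $p=1$, inserting the Gamma density of $R=\|\mathbf{Z}\|_1$ (equivalently (\ref{eq:pdf_p1}) under $\alpha=\sigma/\sqrt{2}$) and the closed-form conditional PEP of Theorem \ref{th:sb_l1}, with the summation starting at $d_{\min}$ since $A_d=0$ below the minimum distance. Your explicit check that (\ref{eq:pdf_R}) at $p=1$ agrees with (\ref{eq:pdf_p1}) is a nice touch the paper leaves implicit.
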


Above all, the linearity of the decoding metric helps to make the sphere bound of the AWLN channel a one-dimensional integral.
While for the other $p$ values, especially for those non-integers, the problem of how to efficiently compute the sphere bound remains open.

\section{Asymptotic IS Gain}
\label{sec:ISgain}
In this section, the efficiency of the proposed IS estimator for the AWGGN channel in terms of the sample size is investigated. We derive the asymptotic IS gain in a multiple integral form as SNR tends to infinity, where the dimension of the integral is the minimum distance of the code. 
Specifically, for the AWLN channel and the AWGN channel, based on the closed-form conditional PEPs, we derive the gains in a one-dimensional integral form, which significantly reduce the complexity of the numerical calculation.

From the variances of the MC and IS estimators in (\ref{eq:MC_var}), (\ref{eq:IS_var}) and their relationships between the relative error in (\ref{eq:SysM_relative_error}), we can get
\begin{align*}
&N_{\text{MC}} = \frac{P_e(1-P_e)}{\text{Var} \left[\hat{P}_e^{\text{MC}}\right]} = \frac{1}{\kappa^2 P_e} - \frac{1}{\kappa^2}, \\
& N_{\text{IS}} = \frac{\int_{0}^{\infty} \theta(r) \frac{g^2(r)}{g^*(r)} dr -P_e^2 }{\text{Var} \left[\hat{P}_e^{\text{IS}}\right]} = \frac{\int_{0}^{\infty} \theta(r) \frac{g^2(r)}{g^*(r)} dr }{\kappa^2 P_e^2} - \frac{1}{\kappa^2},
\end{align*}

The IS gain is defined as the ratio of the variances of the MC and IS estimators given sample size $N$ \cite{xia2003importance}, which is equivalent to the ratio of the generated sample size under the same relative error $\kappa$. 
\begin{equation}
\gamma \triangleq \frac{N_{\text{MC}}}{N_{\text{IS}}} = \frac{P_e - P_e^2}{\int_{0}^{\infty} \theta(r) \frac{g^2(r)}{g^*(r)} dr - P_e^2}.
\label{eq:ISgain_init}
\end{equation}

By substituting the optimal IS distribution (\ref{eq:opt_IS_r}) into (\ref{eq:ISgain_init}), we can get
\begin{equation}
\gamma = \frac{P_e - P_e^2}{\left(\int_{0}^{\infty} \sqrt{\theta(r)}g(r) dr \right)^2 - P_e^2}.
\label{eq:ISgain}
\end{equation}

We consider the asymptotic IS gain as SNR tends to infinity. The major problem here is to represent $P_e$ and $\theta(r)$ in (\ref{eq:ISgain}) in terms of some tabulated coding parameters such as $d_{\text{min}}$ for the asymptotic case.

The union bound indicates that the WER is upper bounded by
\begin{equation}
	P_e \le \sum_{d=d_{\text{min}}}^{n} A_d \text{Pr}(\mathbf{c}_0 \rightarrow \mathbf{c}_d).
\end{equation}
Since the PEP decays exponentially in terms of the Hamming weight $d$, the WER is dominated by the weight-$d_{\text{min}}$ codewords for the asymptotic case
\begin{equation}
	P_e \simeq A_{d_{\text{min}}} \text{Pr} (\mathbf{c}_0 \rightarrow \mathbf{c}_{d_{\text{min}}}).
	\label{eq:asym_WER}
\end{equation}

According to the unified PEP calculation formula in \cite{biglieri1998computing}, the exact PEP can be written as
\begin{equation}
\text{Pr}(\mathbf{c}_0 \rightarrow \mathbf{c}_d) = P(\Delta_d < 0) =  \frac{1}{2\pi j} \int_{c-j\infty}^{c+j\infty}  \frac{\Phi(s)}{s} ds
\label{eq:uniPEP}
\end{equation}
where
\begin{equation*}
\Phi(s) = E\left[e^{-s\Delta_d}\right] = \prod_{i=1}^d E \left[e^{-s\left(|Z_i-2|^p-|Z_i|^p\right)}\right] = \left( \int_{-\infty}^{\infty} e^{-s\left(|z-2|^p-|z|^p\right)} f(z) d z \right)^d
\end{equation*}
is the (two-sided) Laplace transform of the p.d.f. of $\Delta_d$.

According to \cite{biglieri1998computing}, the integral in (\ref{eq:uniPEP}) can be numerically calculated using Gaussian quadratic rule with an even number $m$ of nodes,
\begin{equation}
\text{Pr}(\Delta_d < 0) =\frac{1}{m} \sum_{i=1}^{m/2} \left(\text{Re}\left[\Phi(c+jc\tau_i)\right]+ \tau_i \text{Im}\left[\Phi(c+jc\tau_i)\right] \right)- \epsilon_m,
\end{equation}
where $\tau_i = \tan \left(\frac{(2i-1)\pi}{2m}\right)$, and the error term $\epsilon_m$ vanishes as $m \rightarrow \infty$. The value of $c$ affects the number of nodes necessary to achieve a preassigned accuracy.

Therefore, the asymptotic WER can be numerically evaluated. In the following lemma, we derive the asymptotic error ratio in a multiple integral form.
\begin{lemma}
	Consider a linear block code with minimum distance $d_{\text{min}}$ transmitted through an AWGGN channel with shape parameter $p \ge1 $. Assume the BPSK modulation is applied. As SNR tends to infinity, the asymptotic error ratio for $r \in [d_{\text{min}}^{\frac{1}{p}}, (d_{\text{min}}+1)^{\frac{1}{p}})$ can be approximated by
	\begin{align}
		\theta(r) \simeq A_d \cdot \frac{p^d \Gamma(\frac{n}{p}) r^{-d}}{2^d\Gamma(\frac{n-d}{p}) \Gamma^d(\frac{1}{p})} \underbrace{\int \cdots \int}_{d} I\left( \Delta_d <0 \right) \left(1- \frac{\sum_{i=1}^d |z_i|^p}{r^p}\right) ^{\frac{n-d}{p}-1}dz_1 \cdots d z_d,
		\label{eq:asym_theta}
	\end{align}
	where $d$ refers to $d_{\text{min}}$ for simplicity, and $\Delta_d$ is defined in (\ref{eq:delta_d}).
\end{lemma}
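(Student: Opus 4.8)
The plan is to show that, in the high-SNR regime, the conditional error probability $\theta(r)$ on the annulus $r \in [d_{\text{min}}^{1/p},(d_{\text{min}}+1)^{1/p})$ collapses to a single family of pairwise terms—the one associated with the $A_{d_{\text{min}}}$ minimum-weight codewords—and that this surviving term is exactly the conditional PEP already computed in (\ref{eq:condition_PEP}). The starting point is the exact identity $\theta(r)=\Pr\big(\bigcup_{\mathbf{c}\neq\mathbf{c}_0}\{\Delta_{\mathbf{c}}<0\}\,\big|\,\|\mathbf{Z}\|_p=r\big)$, which writes the MLD error event as the union of the pairwise decision events $\{\Delta_{\mathbf{c}}<0\}$. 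An inclusion--exclusion expansion then gives $\theta(r)=\sum_{d}A_d\,\Pr(\Delta_d<0\mid\|\mathbf{Z}\|_p=r)-(\text{overlap terms})$, so the task reduces to (i) killing all terms with $d\neq d_{\text{min}}$ on the stated annulus and (ii) discarding the overlap corrections as SNR$\,\to\infty$.

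The geometric crux, and the step I expect to be the main obstacle, is a minimum-norm lemma: the smallest $L_p$-norm of a noise vector that can reach the decision boundary toward a weight-$d$ codeword is exactly $d^{1/p}$. This amounts to minimizing $\|\mathbf{z}\|_p^p=\sum_{i=1}^{n}|z_i|^p$ subject to $\sum_{i=1}^{d}|z_i-2|^p\le\sum_{i=1}^{d}|z_i|^p$. I would first set the $n-d$ non-differing coordinates to zero, and then invoke the permutation symmetry of both objective and constraint in the $d$ active coordinates together with the strict convexity of $t\mapsto|t|^p$ for $p>1$ to argue that the constrained minimizer is the symmetric midpoint $z_1=\dots=z_d=1$, yielding $\|\mathbf{z}\|_p^p=d$. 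A Lagrange-multiplier computation confirms that every active coordinate obeys the same scalar stationarity equation, so that the symmetric point is the unique interior critical point; the delicate part, which I would treat carefully, is ruling out asymmetric competitors of smaller norm for non-integer $p$, since the error region $\{\sum|z_i-2|^p\le\sum|z_i|^p\}$ is not convex in general. The immediate payoff is that $\Pr(\Delta_d<0\mid\|\mathbf{Z}\|_p=r)=0$ whenever $r<d^{1/p}$.

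Applying this on the annulus, every codeword of weight $d\ge d_{\text{min}}+1$ has its entire pairwise error region outside the $L_p$-sphere of radius $r<(d_{\text{min}}+1)^{1/p}$, so only the $A_{d_{\text{min}}}$ weight-$d_{\text{min}}$ terms survive and the sum collapses to $A_{d_{\text{min}}}\Pr(\Delta_{d_{\text{min}}}<0\mid\|\mathbf{Z}\|_p=r)$. It remains to dispose of the overlap terms, and this is where the SNR$\,\to\infty$ hypothesis enters: since the radial density $g(r)$ in (\ref{eq:pdf_R}) concentrates at small radii as $\alpha\to 0$, the error-bearing mass is governed by $r$ just above the packing radius $d_{\text{min}}^{1/p}$. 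There each surviving pairwise error region is an infinitesimal cap about its own closest boundary point $(1,\dots,1,0,\dots,0)$, with the support of the $1$'s fixed by the codeword; distinct minimum-weight codewords have distinct supports, hence disjoint caps, so the pairwise events are asymptotically mutually exclusive, and on each cap the nearest competing codeword is precisely the one defining it, so the pairwise event coincides with the MLD error region. The inclusion--exclusion corrections are therefore of higher order in $r-d_{\text{min}}^{1/p}$ and may be dropped, which is exactly the meaning of the $\simeq$ in (\ref{eq:asym_theta}).

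Finally I would substitute the closed form of the conditional PEP. Inserting the conditional density $f(z_1,\dots,z_d\mid X=r^p)$—obtained earlier from Lemma \ref{lemma:gamma}—into (\ref{eq:condition_PEP}) yields $\Pr(\Delta_d<0\mid\|\mathbf{Z}\|_p=r)=\frac{p^{d}\Gamma(n/p)\,r^{-d}}{2^{d}\Gamma((n-d)/p)\,\Gamma^{d}(1/p)}\int\!\cdots\!\int I(\Delta_d<0)\big(1-\tfrac{\sum_i|z_i|^p}{r^p}\big)^{(n-d)/p-1}\,dz_1\cdots dz_d$, with $\Delta_d$ as in (\ref{eq:delta_d}). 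Multiplying by the multiplicity $A_{d_{\text{min}}}$ reproduces (\ref{eq:asym_theta}) verbatim under the convention $d\equiv d_{\text{min}}$. The only genuinely quantitative work lies in the minimum-norm optimization of the second step; everything downstream is bookkeeping on formulas already established in the preceding sections.
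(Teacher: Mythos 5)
Your proof skeleton is the same as the paper's: restrict to the annulus $r \in [d_{\text{min}}^{1/p},(d_{\text{min}}+1)^{1/p})$, argue that every pairwise error event toward a codeword of weight $d \ge d_{\text{min}}+1$ is empty there because the packing radius associated with a weight-$d$ competitor is $d^{1/p}$, replace the union of the surviving weight-$d_{\text{min}}$ events by the sum (the overlaps being absorbed into the $\simeq$), and substitute the conditional density obtained from Lemma \ref{lemma:gamma} into (\ref{eq:condition_PEP}). In fact your treatment of the overlap terms is more explicit than the paper's, which simply writes the approximation (\ref{eq:theta_asym}) without justifying the disjointness of the caps; your distinct-supports argument is a reasonable supplement.

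The genuine gap sits exactly where you predicted: your proof of the minimum-norm lemma. The symmetry-plus-convexity route is not sound as proposed. Permutation symmetry of the objective and the constraint does not force the minimizer to be the symmetric point (symmetry breaking is routine in such problems); the Lagrange stationarity condition is a single scalar equation that every active coordinate must satisfy, but that equation can have several roots, so different coordinates may sit at different roots and the symmetric point need not be the unique interior critical point; and for $p=1$ uniqueness outright fails, since every point of the face $\{0 \le z_i \le 2,\ \sum_{i=1}^{d} z_i = d\}$ attains the minimal $L_1$-norm $d$, not just $\mathbf{z}'=(1,\dots,1)$. None of this machinery is needed, which is how the paper closes the step in one line: by the Minkowski inequality, valid precisely for $p \ge 1$. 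Writing $\mathbf{z}'=(z_1,\dots,z_d)$ for the restriction of the noise to the $d$ differing coordinates and $\mathbf{v}=(2,\dots,2)$, the event $\Delta_d \le 0$ reads $\|\mathbf{v}-\mathbf{z}'\|_p \le \|\mathbf{z}'\|_p$, whence
\begin{equation*}
2\|\mathbf{z}\|_p \ge 2\|\mathbf{z}'\|_p \ge \|\mathbf{z}'\|_p + \|\mathbf{v}-\mathbf{z}'\|_p \ge \|\mathbf{v}\|_p = 2 d^{\frac{1}{p}},
\end{equation*}
so $\text{Pr}\left(\Delta_d<0 \mid \|\mathbf{Z}\|_p = r\right)=0$ whenever $r < d^{1/p}$, uniformly for all $p \ge 1$, integer or not, with no optimization and no case analysis. (Only the minimal value matters for the lemma, not the minimizer, so the non-uniqueness at $p=1$ is harmless.) Replacing your second step by this triangle-inequality argument closes the gap; everything else in your proposal then coincides with the paper's proof.
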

\begin{proof}
	According to the union bound, the error ratio for the AWGGN channel is upper bounded by
	\begin{equation*}
		\theta(r) \le \sum_{d = d_{\text{min}}}^{n} A_d \text{Pr} (\Delta_d <0 | \|\mathbf{Z}\|_p = r),
	\end{equation*}
	where the conditional PEP is provided in (\ref{eq:condition_PEP}).
	
	The p.d.f. $g(r)$ of the random variable $R = \|\mathbf{Z}\|_p$ is shown in (\ref{eq:pdf_R}). Its derivative can be derived as $g'(r) = g(r)\left(\frac{n-1}{r} - \frac{pr^{p-1}}{\alpha^p}\right)$. In most cases, $n \gg p $ holds, which means $g(r)$ decays exponentially as $r > \alpha \left(\frac{n-1}{p}\right)^{\frac{1}{p}}$. Therefore, for the asymptotic case (i.e., $\alpha \rightarrow 0$),  the errors within spheres $\mathcal{Y}_r$'s whose radii $r$ are very close to the packing radius dominate the error performance. 
	
	On the other hand, according to the Minkowski inequality, $\|\mathbf{z}\|_p + \|2\left(\mathbf{c}_0 - \mathbf{c}_d\right) - \mathbf{z}\|_p \ge \|2\left(\mathbf{c}_0 - \mathbf{c}_d\right)\|_p = 2d^{\frac{1}{p}}$ holds when $p \ge 1$. The minimum value of $\|\mathbf{z}\|_p$ when the metric difference equals 0 (i.e., $ \|\mathbf{z}\|_p = \|\mathbf{z} - 2\left(\mathbf{c}_0 - \mathbf{c}_d\right)\|_p$) is $\|\mathbf{z}\|_p = d^{\frac{1}{p}}$. Hence, the packing radius of the BPSK-modulated code transmitted over AWGGN channel with MLD is $r_{\text{pack}} = d_{\text{min}}^{\frac{1}{p}}$. For $r \in [d_{\text{min}}^{\frac{1}{p}}, (d_{\text{min}}+1)^{\frac{1}{p}})$, there only exist the errors wrongly decoded to the weight-$d_{\text{min}}$ codewords. Therefore, the following approximation holds for the asymptotic case.
	\begin{equation}
		\theta(r) \simeq A_{d_{\text{min}}}\text{Pr}\left(\Delta_{d_{\text{min}}} < 0 | \|\mathbf{Z}\|_p = r\right).
		\label{eq:theta_asym}
	\end{equation}
	By substituting (\ref{eq:condition_PEP}) into (\ref{eq:theta_asym}), the asymptotic error ratio can be derived.
\end{proof}

Since the WER satisfies $P_e \ll 1$ for the asymptotic case, the IS gain can be approximated by
\begin{equation}
	\gamma \simeq \frac{P_e}{\left(\int_{0}^{\infty} \sqrt{\theta(r)}g(r) dr \right)^2}.
	\label{eq:ISgain_approx}
\end{equation}
By bringing the asymptotic WER (\ref{eq:asym_WER}) and error ratio (\ref{eq:asym_theta}) into the above equation, we can derive the asymptotic IS gain that only depends on the minimum distance.

As the minimum distance increases, the number of dimensions of the multiple integral in (\ref{eq:asym_theta}) increases, which makes the calculation of the asymptotic IS gain more and more difficult. In the rest of the section, we focus on the AWLN and AWGN channels and derive the mathematical expressions of the asymptotic IS gains in a one-dimensional integral form, which significantly reduce the complexity of the numerical calculation.

\subsection{AWLN Channel, $p=1$}
For the case of $p=1$, according to the PEP for the AWLN channel provided in \cite{marks1978detection,shao2012investigation}, the asymptotic WER can be expressed as
\begin{align}
	&P_e \simeq 2^{-d}A_d \sum_{d_0=1}^{d} \binom{d}{d_0} \sum_{d_2=0}^{d-d_0} \binom{d-d_0}{d_2} \sum_{l =0}^{d_0} \binom{d_0}{l}(-1)^l e^{-\frac{2\sqrt{2}}{\sigma}(d_2+l)} \nonumber \\  
	& \hspace{30pt} \cdot  \Gamma\left( \min\left(0,\frac{\sqrt{2}}{\sigma}(d-2d_2-2l) \right), d_0\right) + 2^{-d} A_d\sum_{d_2=0}^{d} \binom{d}{d_2} e^{-\frac{2\sqrt{2}}{\sigma}d_2} H(2d_2-d),
\end{align}
where $d$ refers to $d_{\text{min}}$ for simplicity and 
\begin{equation*}
	\Gamma(x,s) = \frac{1}{\Gamma(s)} \int_x^{\infty} t^{s-1} e^{-t} dt
\end{equation*}
is the regularized upper incomplete Gamma function.

The asymptotic error ratio can be derived by substituting (\ref{eq:condtion_PEP_p1_res}) into (\ref{eq:theta_asym}),
\begin{align}
	&\theta(r) \simeq 2^{-d} r^{1-n}A_{d}\sum_{d_0 = 0}^d \binom{d}{d_0}  \frac{\Gamma(n)}{\Gamma(d_0)\Gamma(n-d_0)} \sum_{d_2 = 0}^{d-d_0} \binom{d-d_0}{d_2} \sum_{m=0}^{n-d_0-1}\binom{n-d_0-1}{m}  \nonumber \\
	&\hspace{100pt} \cdot \sum_{l=0}^{d_0} \binom{d_0}{l} \frac{(-1)^{n+d_0-1-m-l}}{n-1-m}  \left[(r-2d_2-2l)^{n-1} H(r-2d_2-2l) \right.  \nonumber \\
	&\hspace{135pt} \left.- (r-2d_2-2l)^{m} (d-2d_2-2l)^{n-1-m}H(d-2d_2-2l)\right] \nonumber \\
	&\hspace{110pt} + 2^{-d}r^{1-n}A_d\sum_{d_2=0}^{d} \binom{d}{d_2} \left(r-2d_2\right)^{n-1} H(r-2d_2) H(2d_2-d),
	\label{eq:theta_asym_p1}
\end{align}
where $d$ refers to $d_{\text{min}}$ for simplicity.

Consequently, by bringing the above equations and the p.d.f. of $R = \|\mathbf{Z}\|_1$ provided in (\ref{eq:pdf_p1}) into (\ref{eq:ISgain_approx}), the asymptotic IS gain for the AWLN channel can be derived.

\subsection{AWGN channel, $p=2$}

For the case of $p=2$, the asymptotic WER for the AWGN channel is
\begin{equation}
	P_e \simeq A_{d_{\text{min}}} Q\left(\frac{\sqrt{d_{\text{min}}}}{\sigma}\right),
\end{equation}
where 
\begin{equation*}
	Q(x) = \frac{1}{\sqrt{2\pi}} \int_{x}^{\infty} e^{-\frac{x^2}{2}} dx.
\end{equation*}

The conditional PEP for the AWGN channel is provided in \cite{herzberg1994techniques} as 
\begin{equation}
		\text{Pr}\left(\left.\Delta_d <0 \right| \|\mathbf{Z}\| = r\right)  = \frac{\Gamma(\frac{n}{2})}{\sqrt{\pi}\Gamma(\frac{n-1}{2})} \int_{\frac{\sqrt{d}}{r}}^{1} \left(1 - x^2\right)^{\frac{n-3}{2}} dx. 
	\label{eq:conditional_prob_p2}
\end{equation}

By changing the variable $\cos \phi =x$ and substituting (\ref{eq:conditional_prob_p2}) into (\ref{eq:theta_asym}), the asymptotic error ratio can be written as
\begin{equation}
	\theta(r) \simeq A_{d_{\text{min}}} \cdot\frac{\Gamma\left(\frac{n}{2}\right)}{\sqrt{\pi} \Gamma\left(\frac{n-1}{2}\right)} \int_0^{\arccos\left(\frac{\sqrt{d_{\text{min}}}}{r}\right)} \sin^{n-2} \phi d \phi = \frac{A_{d_{\text{min}}}}{2} I_{1-\frac{d_{\text{min}}}{r^2}} \left( \frac{n-1}{2}, \frac{1}{2}\right) ,
\end{equation}
where 
\begin{equation*}
	I_{x} \left(a, b\right) =\frac{\Gamma(a+b)}{\Gamma(a)\Gamma(b)} \int_0^x t^{a-1} (1-t)^{b-1} dt
\end{equation*}
is the regularized incomplete beta function.

The random variable $R = \|\mathbf{Z}\|$ follows scaled chi distribution with p.d.f.
\begin{equation} 
	g(r)  = \frac{1}{2^{\frac{n}{2}-1} \sigma^n \Gamma(\frac{n}{2})} r^{n-1} e^{-\frac{r^2}{2 \sigma^2}}, \quad r \ge 0.
\end{equation}

Therefore, by substituting the asymptotic WER, the error ratio and the p.d.f. into (\ref{eq:ISgain_approx}), the asymptotic IS gain for AWGN channel is
\begin{equation}
		\gamma \simeq \frac{2^{n-1} \sigma^{2n} \Gamma^2(\frac{n}{2}) Q\left(\frac{\sqrt{d_{\text{min}}}}{\sigma}\right)}{\left( {\displaystyle \int_{\sqrt{d_{\text{min}}}}^{\infty}} I^{\frac{1}{2}}_{1-\frac{d_{\text{min}}}{r^2}} \left( \frac{n-1}{2}, \frac{1}{2}\right) r^{n-1} e^{-\frac{r^2}{2\sigma^2}}dr \right)^2 }.
		\label{eq:asym_IS_gain_p2}
\end{equation}

\section{Simulation Results}
\label{sec:results}
In this section, the derived sphere bound for the AWLN channel is verified through simulation. The efficiency of the proposed IS estimator over the AWGGN channel is shown. Several different coding schemes and shape parameters are applied to show the generality of the method. In addition, the accuracy of the derived asymptotic IS gains compared to the simulated IS gains are provided.

\subsection{Sphere Bound for AWLN Channel}
We first present some examples of the derived sphere bound (\ref{eq:WER_sb_p1}) for the AWLN channel. The primitive BCH codes with parameters $(15,7)$ and $(31,11)$ are considered. In Table \ref{table:WEF_BCH}, the weight distributions of these codes, which have been tabulated in \cite{teradaWD}, are provided.

\begin{table}[!h]
	\centering
	\caption{The weight distributions of primitive BCH codes}
	\label{table:WEF_BCH}
	\begin{tabular}{|l| *{8}{c|}}
		\hline 
		\multirow{2}{*}{(15,7)} & $d$	&5		& 6 & 7 	& 8   & 9 & 10 & 15 \\ \hhline{~--------}
		&$A_d$ & 18	& 30 	& 15 	& 15 & 30 & 18 & 1  \\ \hline \hline
		\multirow{2}{*}{(31,11)} & $d$ & 11 & 12 & 15 & 16 & 19 & 20 & 31 \\ \hhline{~--------}
		& $A_d$ & 186 & 310 & 527 & 527 &  310 & 186 & 1 \\ \hline
	\end{tabular}
\end{table}
In Fig. \ref{fig:BCH_SB}, we compare the derived sphere bound in Corollary \ref{coro:sb_laplace} with the union bound exhibited in \cite{marks1978detection} and the MLD simulation results of the codes. One can see that the derived sphere bound is asymptotically tight as SNR tends to infinity and tightens the union bound especially for low SNR, where the union bound even exceeds the unity and becomes useless.

\begin{figure}[!h]
	\centering
	\subfigure[BCH (15,7)]{
		\includegraphics[width=3in]{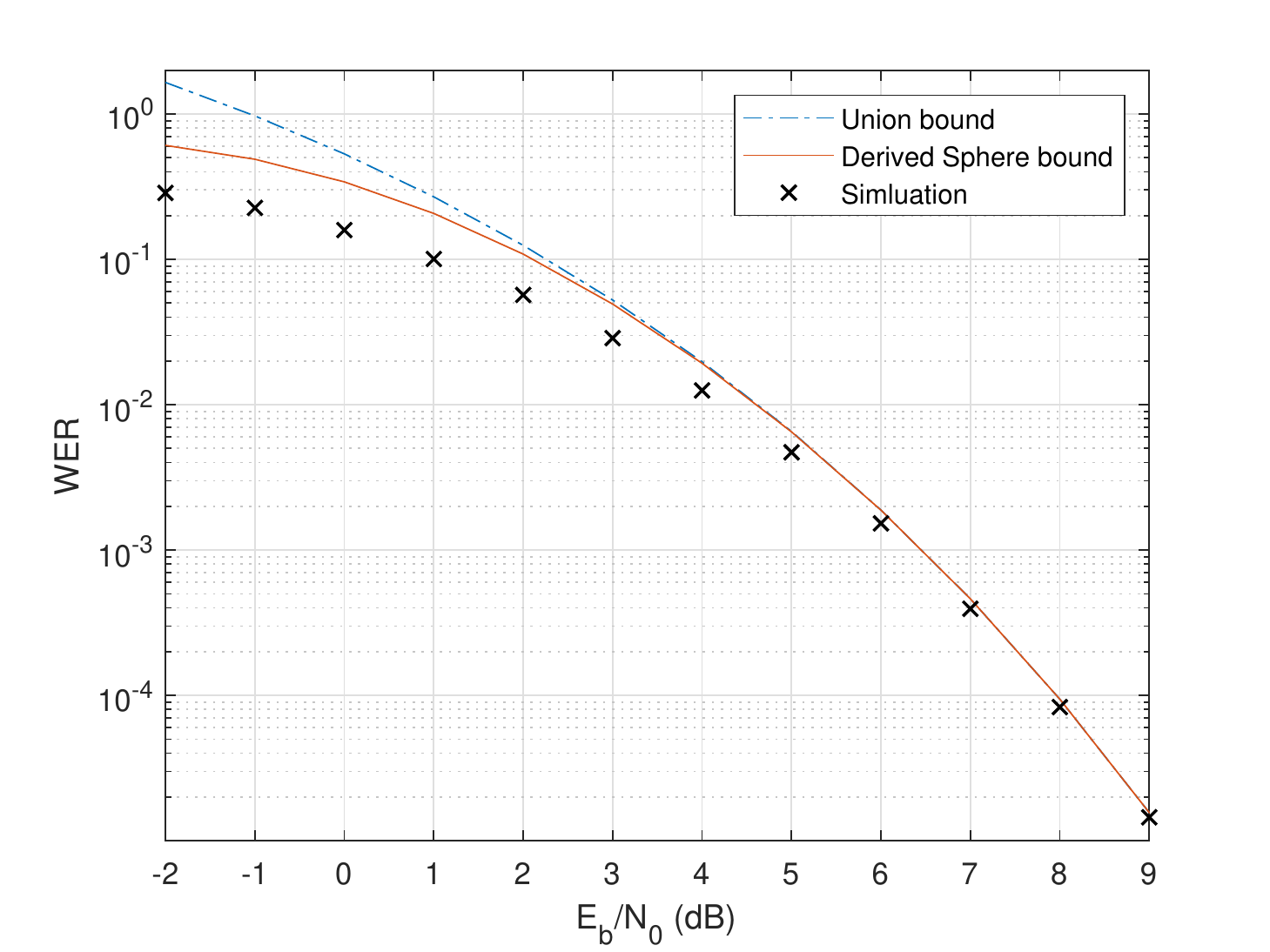} }
	\subfigure[BCH (31,11)]{
		\includegraphics[width=3in]{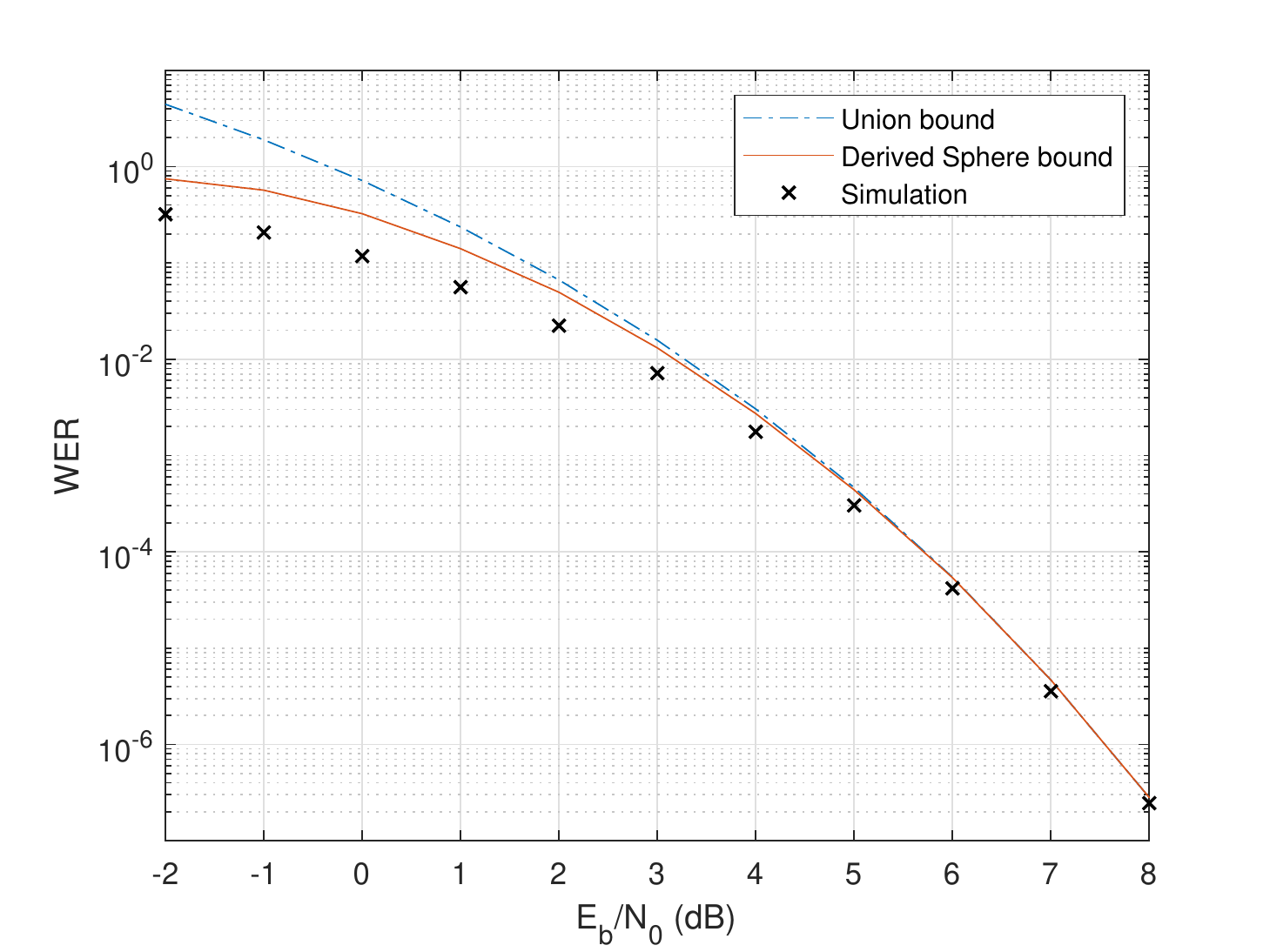} }
	\caption{The MLD performance comparison of the derived sphere bound, the union bound \cite{marks1978detection} and the simulation results of the (15,7) and (31,11) BCH codes for the AWLN channel.}
	\label{fig:BCH_SB}
\end{figure}


\subsection{Importance Sampling}
We use several examples to show the performance of the proposed IS estimator for the AWGGN channel.
Different shape parameters $p$ for the channel are considered to show the generality of the method on different application scenarios.
The accuracy of the derived asymptotic IS gain in predicting the sample size saved compared to the MC method is verified through simulation. As the derivation of the gain requires the minimum distance, the codes whose $d_\text{min}$ have already been tabulated, such as the BCH codes and the EG-LDPC codes \cite{kou2001low}, are considered.

Since there are no errors inside the packing sphere, the range for the sample generation in the radius domain is set as $r_{\text{min}} = d_{\text{min}}^{\frac{1}{p}} $ and $r_{\text{max}} = 5 d_{\text{min}}^{\frac{1}{p}} $ empirically, and the number of shells is set as $m = 500$. 
Besides, we set the minimum number of sample size as $N_{\text{min}} = 500$ and the step size for the update frequency as $N_{\text{step}} = 100$.
The stopping criterion on the relative error of the simulation is set as $\kappa = 0.1$.

\begin{figure}[!h]
	\centering
	\subfigure[WER]{
		\includegraphics[width=3in]{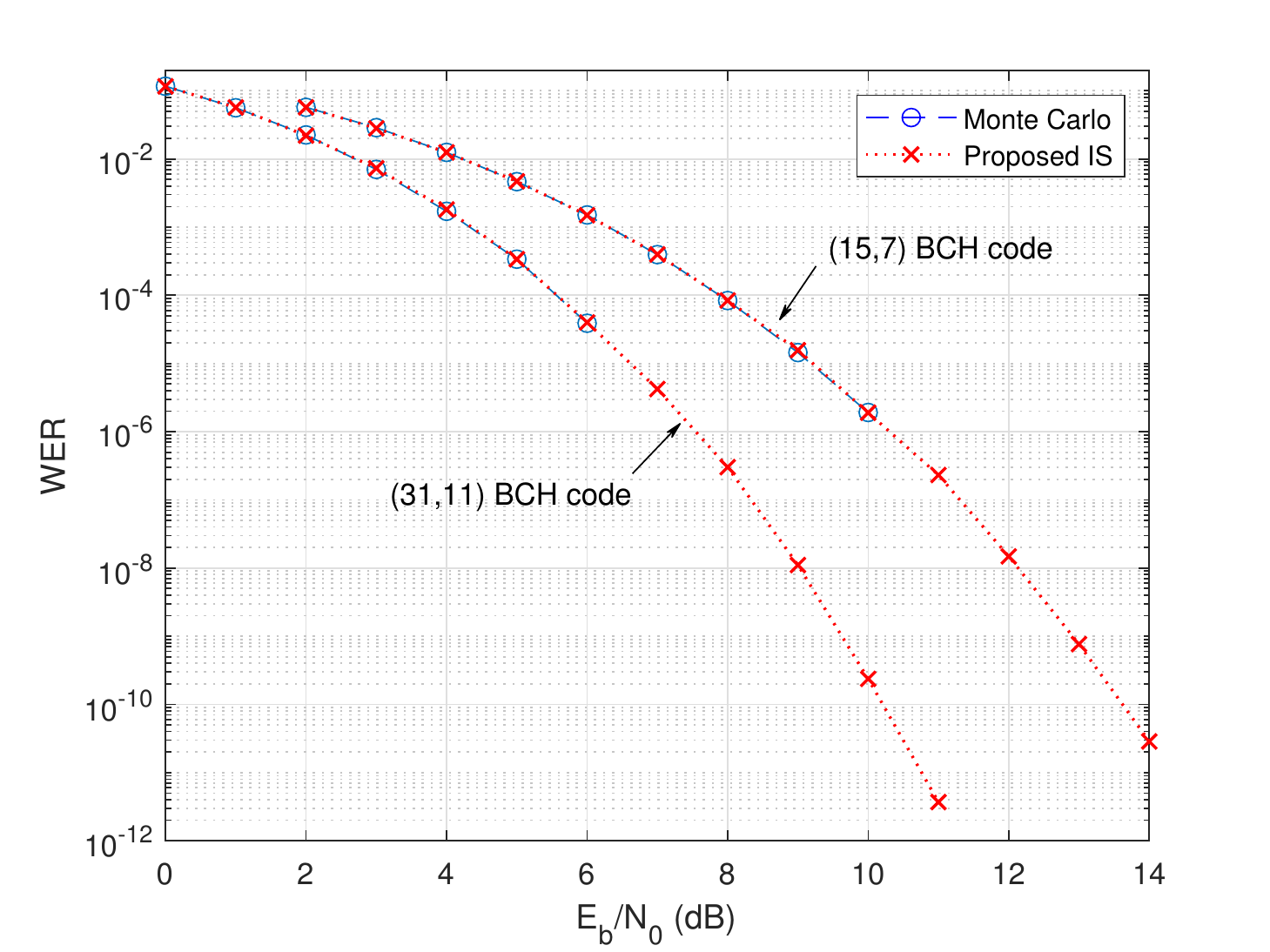} }
	\subfigure[IS gain]{
		\includegraphics[width=3in]{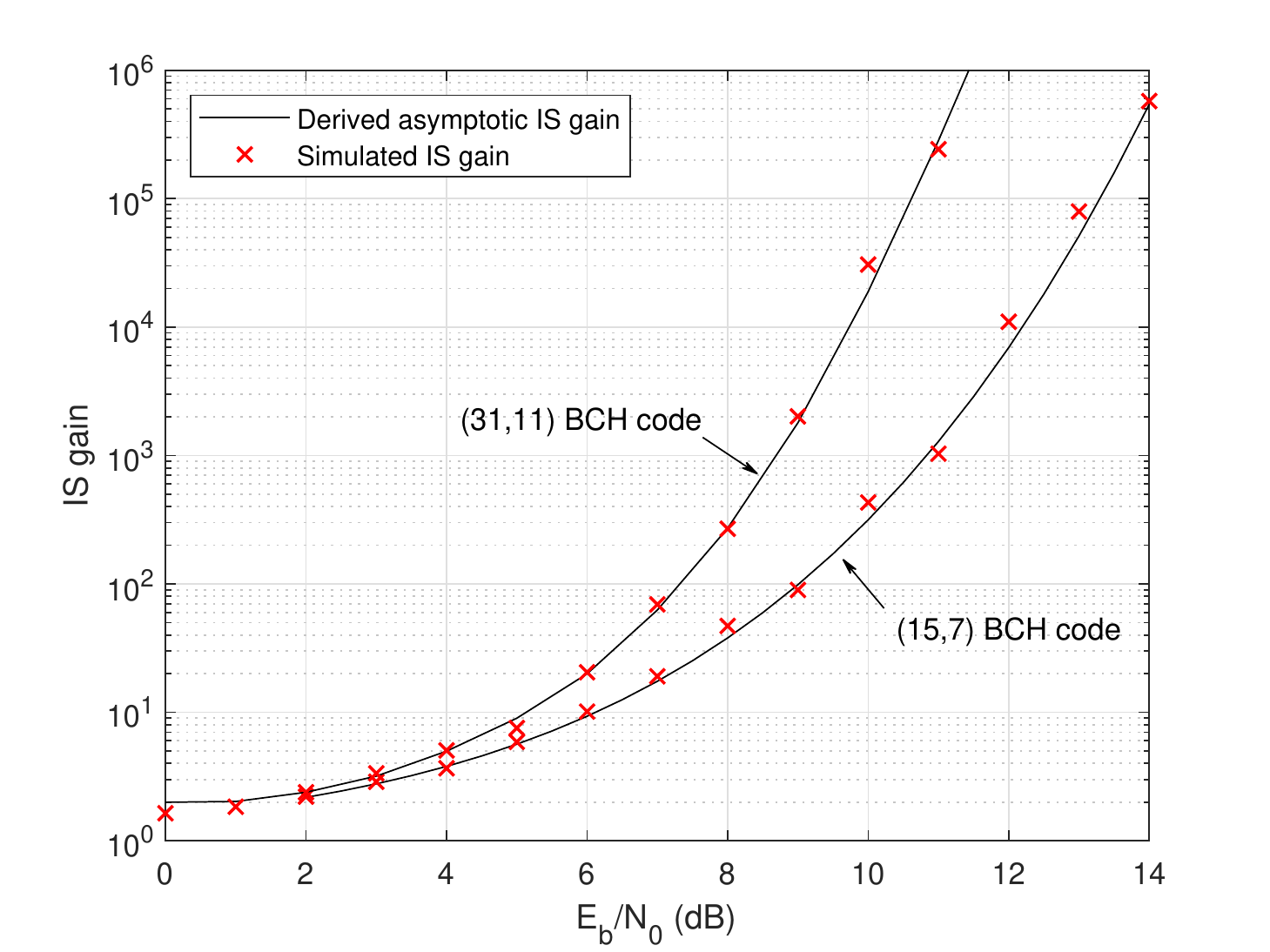} }
	\caption{The MLD simulation results of the (15,7) and (31,11) BCH codes using the proposed IS algorithm for the AWLN channel. (a) WER compared with the MC simulation; (b) Simulated IS gain compared with the derived asymptotic IS gain.}
	\label{fig:BCH_IS_p1}
\end{figure}

The WER performances and the IS gains of the proposed IS estimator using the BCH $(15,7)$ and $(31,11)$ codes for the AWLN channel with MLD are shown in Fig. \ref{fig:BCH_IS_p1}. The curves in Fig. \ref{fig:BCH_IS_p1} (a) evidently show that the IS estimates coincide with their MC counterparts for WER. 
Furthermore, the proposed IS estimator can work in the high SNR region, where the time cost of the MC method is unaffordable.
In Fig. \ref{fig:BCH_IS_p1} (b), the simulated IS gains, as well as the derived asymptotic IS gains in (\ref{eq:asym_IS_gain_p2}) for the applied codes, w.r.t. $E_b/N_0$ in dB are reported. The number of samples required for the MC method in high SNR are approximated by $100/P_e$. 
It can be noticed that the IS gains grow exponentially as SNR increases, and the sample size saved by using the proposed method can be significantly large at high SNR. Take the BCH $(31, 11)$ code as an example, it takes $4.6\times 10^6$ samples and $13$ minutes for the proposed IS estimator to estimate the WER of $10^{-8}$ at $9$ dB. As a comparison, the MC method requires around $10^{10}$ samples for an estimate at the same accuracy, which takes almost 20 days.
Besides, we can observe that the derived asymptotic IS gain fits the simulated one pretty well, which means one can predict the sample size saving using the proposed IS estimator before the simulation.

For the AWGN channel, we use the EG-LDPC codes with parameters $(15,7)$ and $(63,37)$ provided in \cite{kou2001low} to verify the efficiency of our method. The minimum distances of the codes are therein provided, which are 5 and 9, respectively. The codes are decoded by the sum-product algorithm with a maximum iteration number of $50$. 
The WER results and the IS gains of the proposed IS estimator compared to the MC method are shown in Fig. \ref{fig:EG-LDPC1_IS_p2} (a) and (b), respectively. Similar as the cases for the AWLN channel, the proposed method shows a significant advantage in terms of the efficiency in high SNR. 
\begin{figure}[!h]
	\centering
	\subfigure[WER]{
		\includegraphics[width=3in]{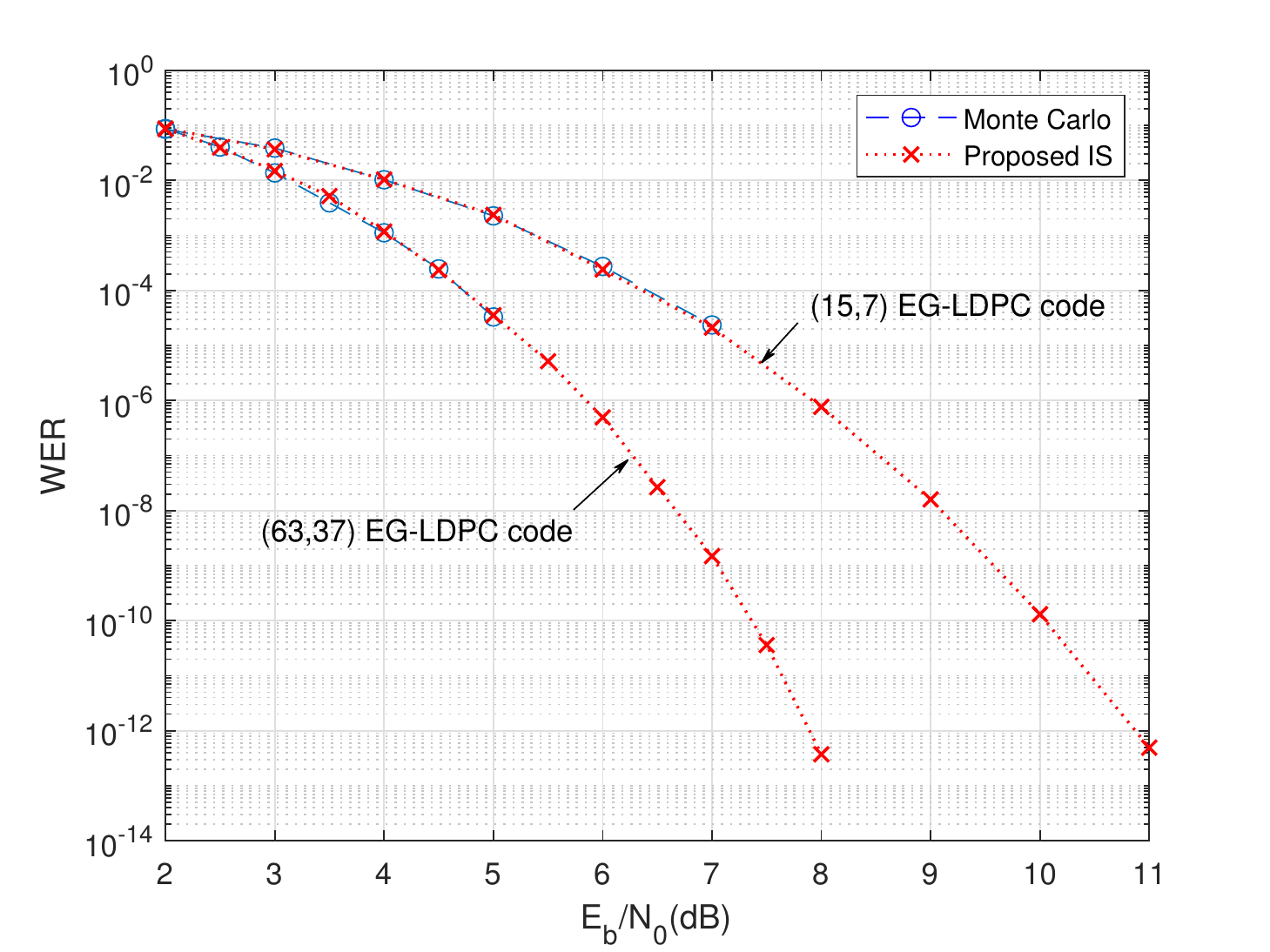} }
	\subfigure[IS gain]{
		\includegraphics[width=3in]{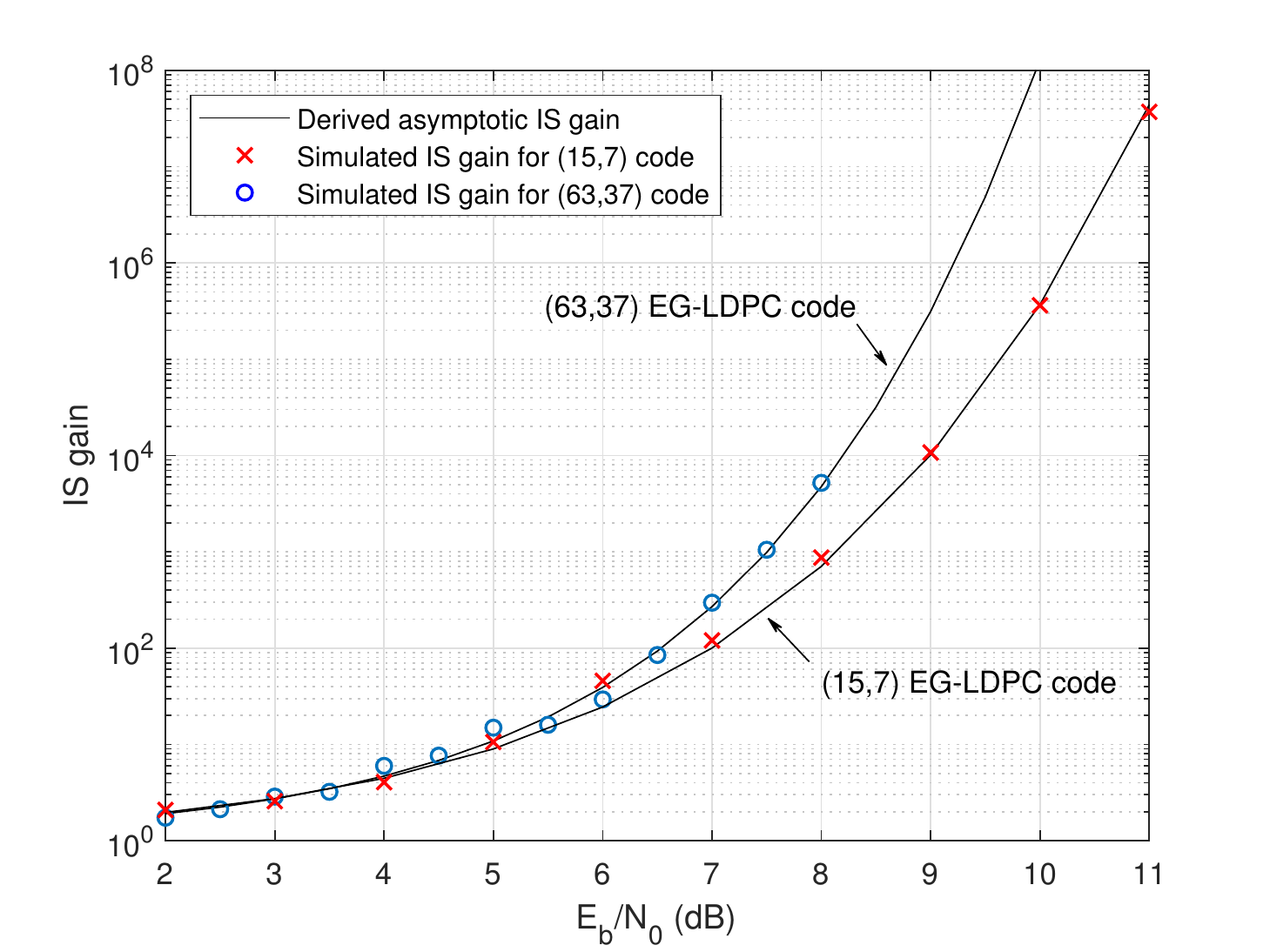} }
	\caption{ The simulation results of the (15,7) and (63,37) EG-LDPC codes \cite{kou2001low} using the proposed IS algorithm for the AWGN channel decoded by the sum-product algorithm. (a) WER compared with the MC simulation; (b) Simulated IS gain compared with the proposed asymptotic IS gain.}
	\label{fig:EG-LDPC1_IS_p2}
\end{figure}

For comparison, we also apply our method to MacKay's (96,50) LDPC code \cite{MacKayCodes}, which is used in \cite{xia2003importance} and \cite{holzlohner2005evaluation} as an example to show the efficiency of their methods. 
As the minimum distance of the code is unavailable, only the simulated IS gain is shown in  Fig. \ref{fig:MacKay} (b).
For $E_b/N_0 = 9$ dB, totally $8.9\times10^9$ samples are generated for the proposed IS method, which corresponds to an IS gain around $\gamma = 2.8\times10^3$. Compared to the method introduced in \cite{xia2003importance}, our IS estimator can achieve a three times larger IS gain. 
Although the DAIS method in \cite{holzlohner2005evaluation} also shows a significant gain, the empirical stopping criterion they used makes it hard to make a fair comparison.

\begin{figure}[!h]
	\centering
	\subfigure[WER]{
		\includegraphics[width=.45\linewidth]{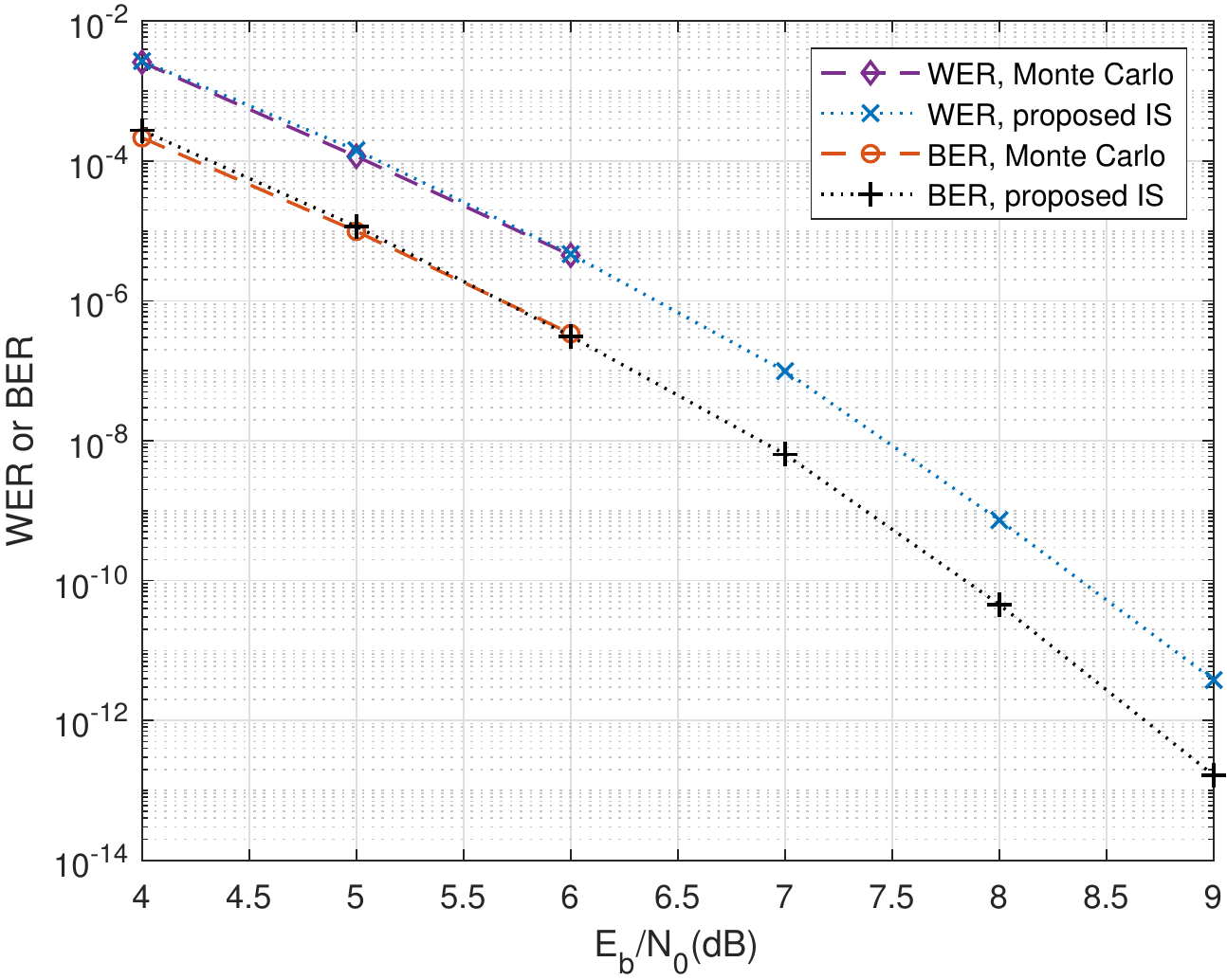} }
	\subfigure[IS gain]{
		\includegraphics[width=.45\linewidth]{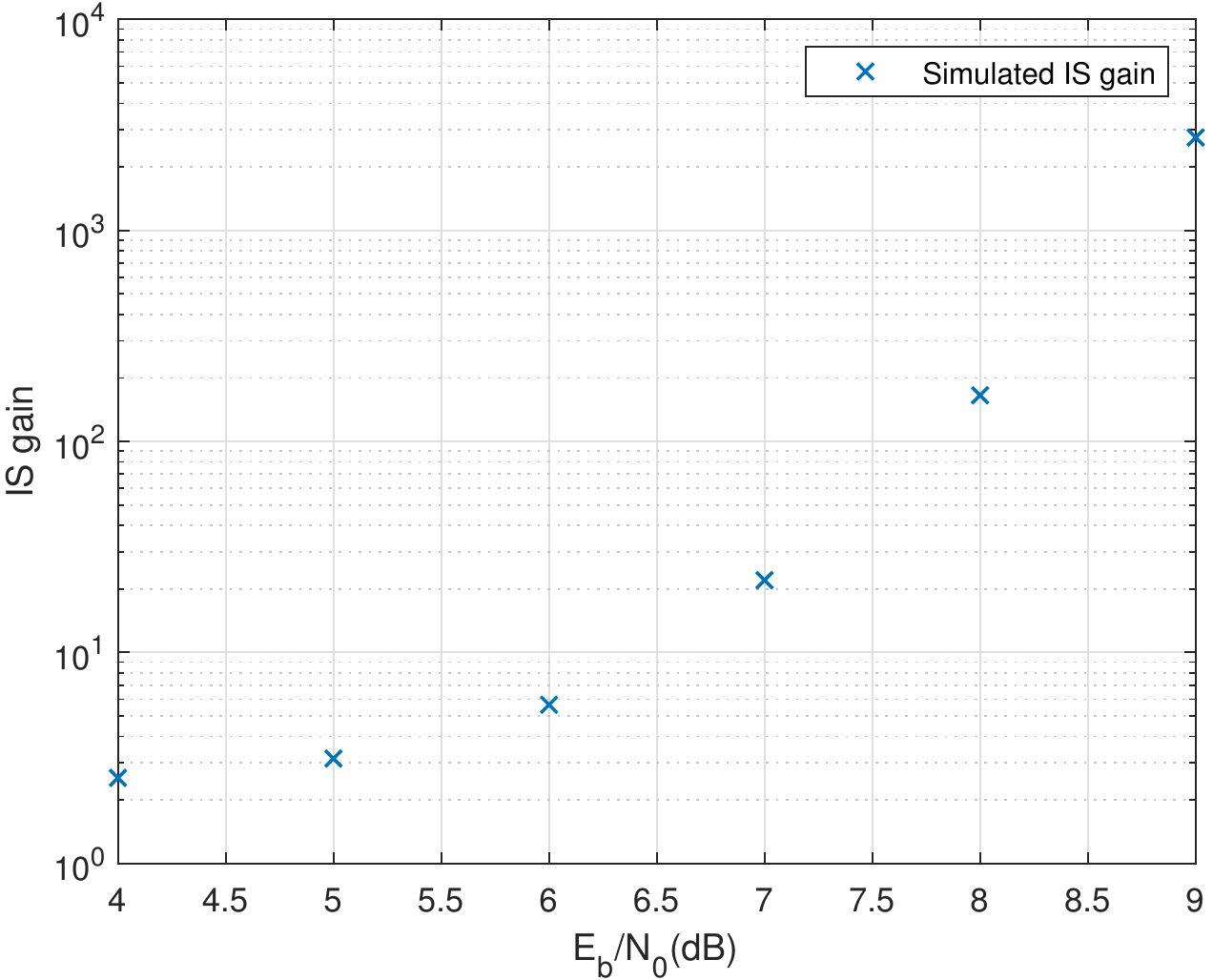} }
	\caption{Implementation of the proposed IS estimator using (96,50) MacKay's code \cite{MacKayCodes}. (a) WER and BER comparisons between the MC and the proposed IS method; (b) Simulated IS gain of the proposed IS method.}
	\label{fig:MacKay}
\end{figure}

We also consider the AWGGN channels with $p = 1.6$ and $2.8$ as examples to show the generality of the proposed IS estimator. The $p$ values are chosen according to the underwater acoustic channel \cite{banerjee2013underwater}. The performances of the BCH $(15,7)$ code with MLD are shown in Fig. \ref{fig:BCH_IS_p}. For reference, the results for the AWLN and AWGN channel are also shown. 
\begin{figure}[!h]
	\centering
	\subfigure[WER]{
		\includegraphics[width=3in]{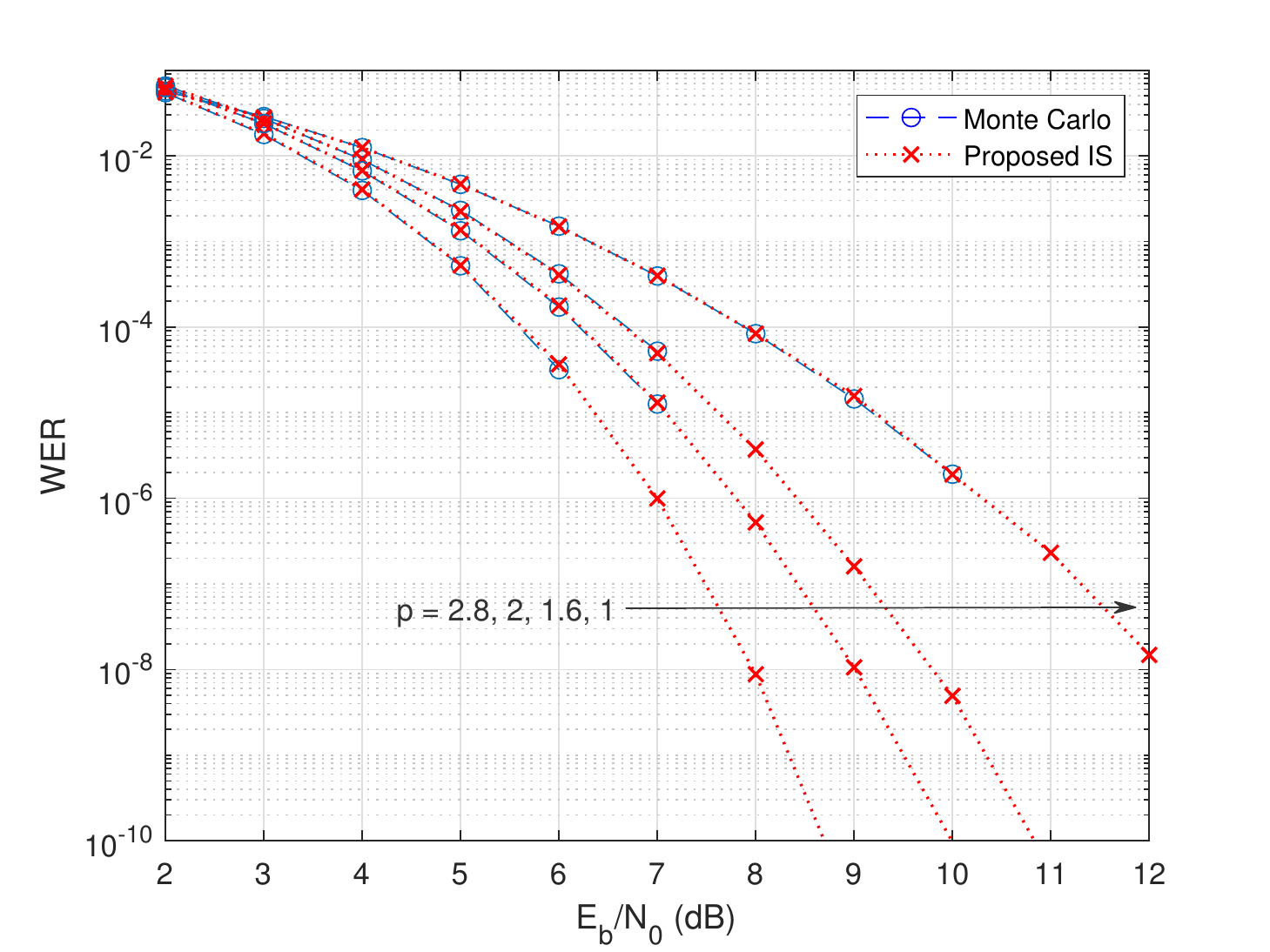} }
	\subfigure[IS gain]{
		\includegraphics[width=3in]{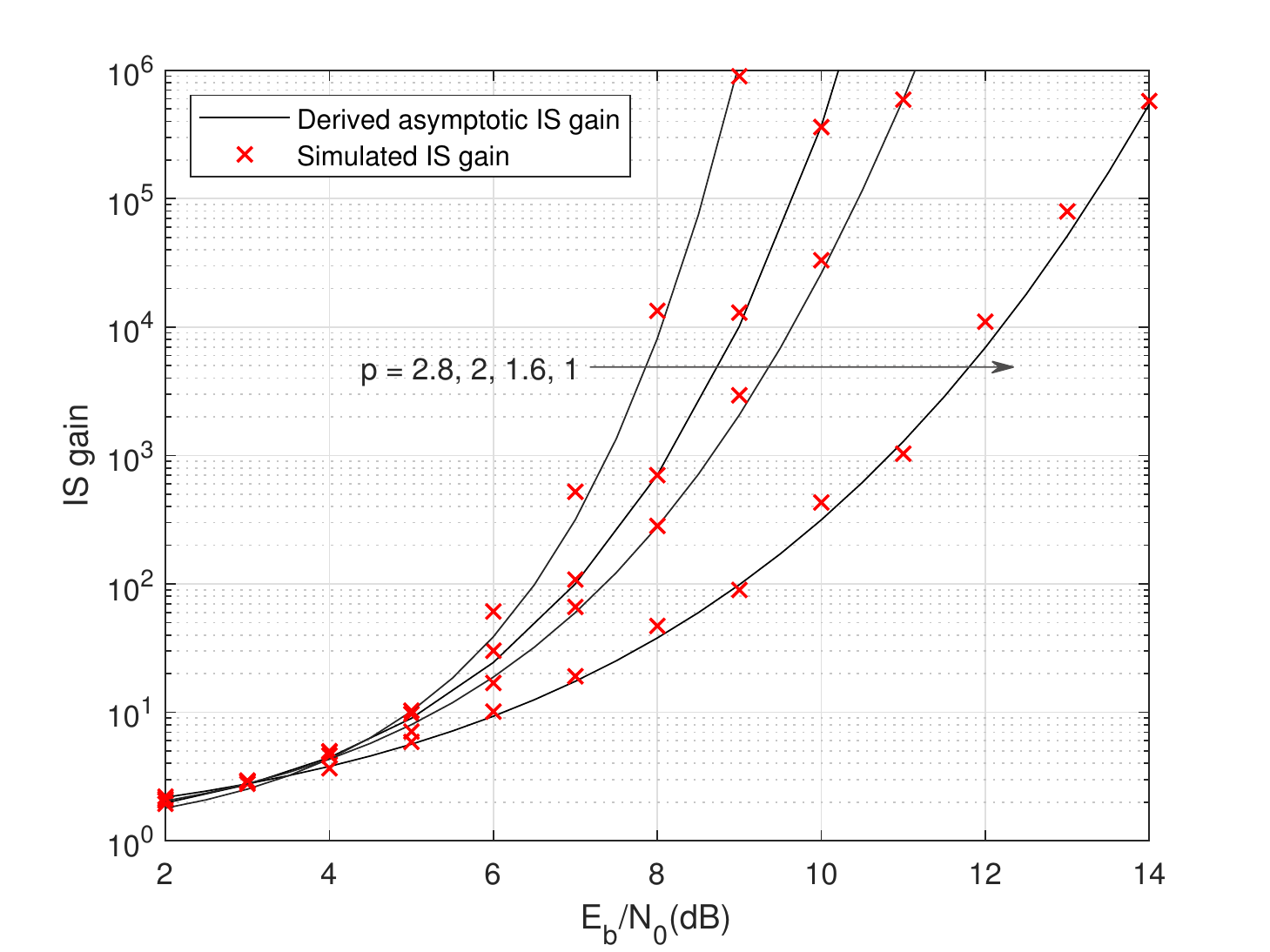} }
	\caption{The MLD simulation results of the (15,7) BCH code using the proposed IS algorithm for the AWGGN channel with various $p$ values. (a) WER compared with the MC simulation; (b) Simulated IS gain compared with the proposed asymptotic IS gain.}
	\label{fig:BCH_IS_p}
\end{figure}
As we can see from Fig. \ref{fig:BCH_IS_p} (b), the proposed IS estimator shows advantage in terms of efficiency for all the applied scenarios, and the derived asymptotic IS gain predicts its simulated counterpart accurately. 
These derived IS gains for the non-integer values of $p$ are obtained by numerical integration.
As the minimum distance of the code increases, it becomes infeasible to numerically calculate its asymptotic IS gain. Therefore, we can use the IS gains of the AWLN and AWGN cases to roughly determine the range of the gains for non-integer $p$ cases.

\section{Conclusion}
\label{sec:conclusion}
In this paper, the problem of efficiently evaluating the error performance of linear block codes over an AWGGN channel is investigated. 
For the memoryless continuous channels, we present a general framework to designing IS estimators by defining a random variable as a function of the $n$-dimensional received vector and deriving the optimal IS distribution for it. 
Specifically for the AWGGN channel, we choose the $L_p$-norm as the mapping function and propose a minimum-variance IS estimator. 
We investigate the essential role that the conditional PEP plays in the derivation of the sphere bound and derive the PEP conditioned on the $L_1$-sphere for the AWLN channel in a closed-form expression.
By choosing the $L_1$-sphere centered at the transmitted signal vector as the Gallager region, the sphere bound for the AWLN channel is thus derived, where the radius of the sphere is optimized to tighten the bound. 
Furthermore, the asymptotic IS gain of the proposed IS estimator implemented on the AWGGN channel is derived in a multiple integral form. Specifically for the AWLN and AWGN channels, the IS gains can be derived in a one-dimensional integral form based on the closed-form conditional PEPs. Simulation results have shown the efficiency of the proposed IS estimator under different channel parameters and coding schemes. The accuracy of the derived IS gain in predicting the performance of the estimator is also verified.

\appendix[Proof of Theorem \ref{th:sb_l1}]
\label{appendix:sb_l1}
Denote $X_{d_0} = \sum_{i=1}^{d_0} Z_i$ as a random variable. Its two-sided Laplace transform is
\begin{align}
	F_{X_{d_0}}(s) &= E\left[e^{-sX_{d_0}}\right] = \underbrace{\int \cdots \int}_{d_0} e^{-s\sum_{i=1}^{d_0}z_i} f(z_1,\cdots,z_{d_0}, D|r) dz_1\cdots dz_{d_0} \nonumber \\
	&=\frac{\Gamma(n)r^{1-n}}{2^d \Gamma(n-d_0)}  \underbrace{\int_0^2 \cdots \int_0^2}_{d_0}  e^{-s\sum_{i=1}^{d_0}z_i} \left(r-2d_2 - \sum_{i=1}^{d_0}z_i\right)^{n-d_0-1} \hspace{-3pt} dz_1\cdots dz_{d_0}. 
	\label{eq:l1_laplace}
\end{align}
Define
\begin{align}
	\xi_k(a) \triangleq \int_0^2 e^{-sx}(a-x)^k dx = \sum_{m=0}^k \left(-\frac{1}{s}\right)^{k+1-m} \frac{k!}{m!} \left(e^{-2s}(a-2)^m-a^m \right). \nonumber
\end{align}
Then the following double integral can be expressed in a closed-form as
\begin{align*}
	&\int_{0}^2\int_{0}^2 e^{-s(z_1+z_2)} \left(r-2d_2-\sum_{i=1}^{d_0}z_i \right)^{n-d_0-1} dz_1 dz_2 \\
	= & \int_{0}^2 e^{-sz_2} \xi_{n-d_0-1}\left(r-2d_2-\sum_{i=2}^{d_0}z_i\right) dz_2 \\
	= & \sum_{m=0}^{n-d_0-1} \left(-\frac{1}{s}\right)^{n-d_0-m} \frac{(n-d_0-1)!}{m!} \left(e^{-2s}\xi_{m}\left(a-2\right) - \xi_{m}\left(a\right) \right) \\
	= & \sum_{m=0}^{n-d_0-1} \left(-\frac{1}{s}\right)^{n-d_0-m} \frac{(n-d_0-1)!}{m!}\sum_{q=0}^{m} \left(-\frac{1}{s}\right)^{m+1-q}\frac{m!}{q!} \sum_{l = 0}^2 \binom{2}{l}(-1)^{2-l}e^{-2ls}(a-2 l)^q \\
	= & \sum_{m=0}^{n-d_0-1} \sum_{q=0}^m \left(-\frac{1}{s}\right)^{n-d_0+1-q} \frac{(n-d_0-1)!}{q!} \sum_{l = 0}^2 \binom{2}{l}(-1)^{2-l}e^{-2ls}(a-2 l)^q \\
	= & \sum_{q=0}^{n-d_0-1} (n-d_0-q) \left(-\frac{1}{s}\right)^{n-d_0+1-q} \frac{(n-d_0-1)!}{q!} \sum_{l = 0}^2 \binom{2}{l}(-1)^{2-l}e^{-2ls}(a-2 l)^q,
\end{align*}
where $a = r-2d_2-\sum_{i=3}^{d_0}z_i$ for simplicity.

Consequently, the multiple integral in (\ref{eq:l1_laplace}) can be solved in a similar manner. A tedious computation yields
\begin{align}
	F_{X_{d_0}}(s) &=\frac{\Gamma(n)r^{1-n}}{2^d \Gamma(n-d_0)} \sum_{m=0}^{n-d_0-1} \binom{n-2-m}{d_0-1}\left(- \frac{1}{s}\right)^{n-1-m} \frac{(n-d_0-1)!}{m!}  \nonumber\\ 
	& \hspace{180pt} \cdot\sum_{l=0}^{d_0} \binom{d_0}{l} (-1)^{d_0-l}e^{-2ls}(r-2d_2-2l)^m. 
	\label{eq:l1_laplace2}
\end{align}
The p.d.f. of $X_{d_0}$ can be derived by the inverse Laplace transform,
\begin{equation*}
	f_{X_{d_0}}(x) = \frac{1}{2\pi j} \int_{c-j\infty}^{c+j\infty}e^{sx}F_{X_{d_0}}(s) ds.
\end{equation*}
We can evaluate the above integral using contour integration, the residue theorem, and Jordan's lemma. After simplification, it becomes
\begin{align*}
	&f_{X_{d_0}}(x) = \frac{\Gamma(n)r^{1-n}}{\Gamma(d_0)\Gamma(n-d_0)2^d} \sum_{m=0}^{n-d_0-1}\binom{n-d_0-1}{m} \sum_{l=0}^{d_0} \binom{d_0}{l} (-1)^{n+d_0-1-m-l} \\
	&\hspace{130pt} \cdot (r-2d_2-2l)^m(x-2l)^{n-2-m} H(x-2l), \quad 0 \le x \le r-2d_2.
\end{align*}

Then, the conditional PEP in (\ref{eq:condition_PEP_p1}) can be derived as 
\begin{align*}
	&\text{Pr}(\Delta_d<0 |\|\mathbf{Z}\|_1 \hspace{-3pt} =r) = \hspace{-3pt} \sum_{d_0 = 1}^d \sum_{d_2 = 0}^{d-d_0} \binom{d}{d_0} \hspace{-2pt} \binom{d-d_0}{d_2} \hspace{-4pt} \int_{\underline{x}}^{\overline{x}} \hspace{-6pt} f_{X_{d_0}}(x) dx  + \frac{1}{2^d} \hspace{-2pt} \sum_{d_2=0}^{d} \hspace{-2pt} \binom{d}{d_2} \hspace{-3pt} \left(\frac{\overline{x}}{r}\right)^{n-1} \hspace{-12pt} H(\overline{x}) H(-\underline{x}) \\
	&= \frac{1}{2^d}\sum_{d_0 = 1}^d \binom{d}{d_0}  \frac{\Gamma(n) r^{1-n}}{\Gamma(d_0)\Gamma(n-d_0)} \sum_{d_2 = 0}^{d-d_0} \binom{d-d_0}{d_2} \hspace{-5pt} \sum_{m=0}^{n-d_0-1} \hspace{-5pt} \binom{n-d_0-1}{m} \sum_{l=0}^{d_0} \binom{d_0}{l} \frac{(-1)^{n+d_0-1-m-l}}{n-1-m} \\
	&\hspace{15pt} \cdot (\overline{x}-2l)^{n-1} \left( H(\overline{x}-2l) - \left( \frac{\underline{x}-2l}{\overline{x}-2l} \right)^{n-1-m} \hspace{-15pt} H(\underline{x}-2l)\right) +\frac{1}{2^d}\sum_{d_2=0}^{d} \binom{d}{d_2} \left(\frac{\overline{x}}{r}\right)^{n-1} \hspace{-12pt} H(\overline{x}) H(-\underline{x}),
\end{align*}
where $\overline{x} = r-2d_2$ and $\underline{x} = d-2d_2$.


%




\ifCLASSOPTIONcaptionsoff
  \newpage
\fi



\bibliographystyle{IEEEtran}
\bibliography{bare_jrnl}

%

%

%





\end{document}